\newtheorem{thm}{Theorem}[section]
\newtheorem{lem}[thm]{Lemma}
\newtheorem{prop}{Proposition}
\newtheorem{defi}[thm]{Definition}
\newtheorem{rem}{Remark}
\newcommand{\ep}{\varepsilon}
\newcommand{\R}{\mathbb{R}}
\newcommand{\ud}{\mathrm{d}}
\newcommand{\mc}{\mathcal}
\newcommand{\be}{\begin{equation}}
\newcommand{\ee}{\end{equation}}
\newcommand{\bs}{\begin{split}}
\newcommand{\es}{\end{split}}
\newcommand{\bee}{\begin{equation*}}
\newcommand{\eee}{\end{equation*}}
\newcommand{\ssi}{\Leftrightarrow}
\newcommand{\vphi}{\varphi}
\begin{document}
\title{Derivation of a two-fluids model for a Bose gas from a quantum kinetic system\footnote{Primary: 76Y05; Secondary: 82C40, 82D50.}}
\author{Thibaut Allemand\footnote{DMA, \'Ecole Normale Sup\'erieure, 45 rue d'Ulm, 75230 Paris Cedex 05, France}}
\maketitle

\begin{abstract}
We formally derive the hydrodynamic limit of a system modelling  a bosons gas having a condensed part, made of a quantum kinetic and a Gross-Pitaevskii equation. The limit model, which is a two-fluids Euler system, is approximated by an isentropic system, which is then studied. We find in particular some conditions for the hyperbolicity, and we study the weak solutions. A numerical example is given at the end.
\end{abstract}

\section{Introduction}

There has been a lot of research done about Bose-Einstein condensates in mathematics and physics, especially since the Nobel Prize 2001 was awarded to Carl, Cornell and Ketterle who succeeded in creating one for the first time. In particular, mathematicians have been working on various models to see if the predictions of the physicists could be justified mathematically.

There are actually several types of models to describe a Bose-Einstein condensate \cite{bec}. The only one which is valid without any approximation is the atomistic one, that is, the linear Schr\"odinger equation for the P bodies density:
\be
\label{Nbodies}
 ih\partial_t \Psi_P=H_P\Psi_P
\ee
where $H_P$ is the Hamiltonian, i.e. the energy operator, of the system, and $h$ is Planck's constant. Other models can be found depending on the regime under consideration. At zero temperature, the gas is entirely condensated. Assuming  that it is in a non-dissipative trap, the one-body density is governed by the Gross-Pitaevskii equation:
\[
 ih\partial_t \Psi = -h^2 \Delta_x \Psi + U\Psi +|\Psi|^2\Psi
\]
where $U$ is the trapping potential. When the temperature of the gas is close to zero, the gas is composed of a condensate and a normal component. The condensate is still expected to be governed by a Gross-Pitaevskii-type equation, with coupling terms taking into account the mass and momentum exchanges with the normal component. The latter can be described in a probabilistic way using kinetic equations; in this case, a quantum Boltzmann equation, first proposed by Nordheim \cite{nordheim}, with additional terms. We could also use a fluid description for the normal part together with the usual Gross-Pitaevskii  equation for the condensate; still another possibility is to use a fluid description for both parts of the gas, leading to a two fluids model as predicted by Landau \cite{landau}. The way these models are related is represented in Figure \ref{models}.

\begin{figure}
\begin{center}
  \includegraphics[width=10cm]{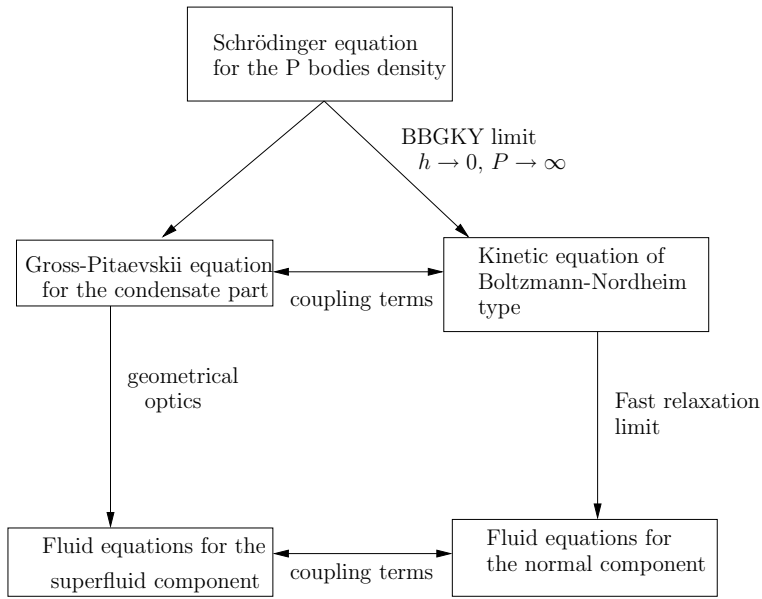}
\end{center}
\caption{Various models describing a Bose gas, and their interactions.}
\label{models}
\end{figure}

In \cite{pomeau}, a kinetic model for the non-condensate part has been derived from (\ref{Nbodies}) using the BBGKY expansion in the low density limit. The condensated part is then described by a Gross-Pitaevskii equation, which takes into account both the interactions between the atoms of the condensate, and between them and those of the non-condensate part (called normal fluid).

What we are interested in is to link this model with models at a larger scale, namely, fluid models. This can be done considering the fast relaxation limit for the kinetic equation, together with the semiclassical limit for the Gross-Pitaevskii equation.

Our starting point is the following model: let 
\[
 \varphi \equiv \varphi(t,x,v) \geq 0
\]
 denote the phase-space density function of the normal fluid, and 
\be
\label{psi}
 \Psi(t,x)=\sqrt{\rho_s(t,x)}e^{i\theta(t,x)/h}
\ee
denote the wave function of the condensated (or superfluid) phase, with $\rho_s$ being the mass of the superfluid. Here, $t \in \R_+$ is the time variable, and $x,v \in \R^N$ for $N \geq 1$ are the position and velocity variables. 

In \cite{pomeau}, Pomeau, Brachet, M\'etens and Rica derive the following coupled system of equations:

\begin{equation}
 \label{modele}
\begin{cases}
 & \displaystyle{\partial_t \varphi + v.\nabla_x \varphi - \nabla_x V_1.\nabla_v \varphi = Q(\varphi) + \rho_s Q_1(\varphi)} \\
 & \displaystyle{ih\partial_t \Psi = -\frac{h^2}{2}\Delta_x \Psi  + V_2\Psi + \frac{ih}{2}Q_2(\varphi)\Psi }.
\end{cases}
\end{equation}
The first equation expresses a balance between the left-hand term, which is a Vlasov operator for a gas in the external field $\nabla_x V_1$, and the right-hand term, which comes from the normal fluid-normal fluid and normal fluid-superfluid interactions. The second equation is a cubic Schr\"odinger equation (because of the form (\ref{potentials}) of $V_2$), with an interaction term with the normal fluid.  The first collision operator $Q(\vphi)$ is the usual Boltzmann-Nordheim operator for the bosons, modelling the collisions between two particles from the normal fluid. It writes
\begin{equation}
 \label{Q}
Q(\varphi) = \int_{\R^N}\int_{S^{N-1}} \left(\varphi'\varphi_*'(1+\varphi)(1+\varphi') - \varphi\varphi_*(1+\varphi')(1+\varphi_*')\right)\ud v_*\ud n
\end{equation}
with the usual notations
\bee
\varphi=\varphi(t,x,v),~~\varphi_*=\varphi(t,x,v_*),~~ \varphi'=\varphi(t,x,v'),~~ \varphi_*'=\varphi(t,x,v_*').
\eee
The post-collisional velocities $(v,v_*)$ and the precollisional ones $(v', v_*')$ satisfy the conservation of momentum and kinetic energy
\begin{equation}
\label{microscons}
 \begin{cases}
  & v+v_*=v'+v_*' \\
  & |v|^2+|v_*|^2=|v'|^2+|v_*'|^2
 \end{cases}
\end{equation}
since the collisions are assumed to be elastic. The solutions of (\ref{microscons}) can be parametrized by a unit vector $n\in S^{N-1}$, so that 
\bee
\begin{cases}
&v'=v-[(v-v_*).n]n\\
&v_*'=v_*+[(v-v_*).n]n.
\end{cases}
\eee
The other two collision operators $Q_1(\vphi)$ and $Q_2(\vphi)$ both take into account the collisions between particles from the normal fluid and from the superfluid. The operator $Q_1(\vphi)$ writes
\begin{equation}
\label{Q1}
\begin{split}
 Q_1(\varphi)(v) = \int_{C_1} \big[\vphi(v')&\vphi(v_*')-\vphi(v)\left(\vphi(v')+\vphi(v_*')+1\right)\big]\ud v'\ud v_*'\\
& +2 \int_{C_2} \big[\vphi(v_*')\left(\vphi(v)+\vphi(v_*)+1\right)-\vphi(v)\vphi(v_*)\big]\ud v_* \ud v_*'
\end{split}
\end{equation}
where $C_1$ is the set of all the velocities $v', v_*'$ such that
\begin{equation}
\label{cons1}
 \begin{cases}
  & \nabla_x \theta + v = v' + v_*' \\
  & -2 \partial_t \theta - 2V_1 + |v|^2 = |v'|^2+|v_*'|^2,
 \end{cases}
\end{equation}
and $C_2$ is the set of the velocities $v_*, v_*'$ satisfying
\begin{equation*}
 \begin{cases}
  & \nabla_x \theta + v_*' = v + v_* \\
  & -2 \partial_t \theta - 2V_1 + |v_*'|^2 = |v|^2+|v_*|^2.
 \end{cases}
\end{equation*}
As to the last collision operator $Q_2$, it writes
\begin{equation*}
 Q_2(\vphi) = \int_{C_3} \big[ \vphi(v')\vphi(v_*')-\vphi(v_*)(\vphi(v')+\vphi(v_*')+1)\big]\ud v_* \ud v' \ud v_*'
\end{equation*}
where $C_3$ is the set of all velocities $v_*, v', v_*'$ such that
\begin{equation*}
 \begin{cases}
  & \nabla_x \theta + v_* = v' + v_*' \\
  & -2 \partial_t \theta - 2V_1+ |v_*|^2 = |v'|^2+|v_*'|^2.
 \end{cases}
\end{equation*}
The terms $-2 \partial_t \theta - 2V_1$ which are added in the microscopic energy conservation equations are here to ensure the possibility of having a condensate with velocity and energy which are other than zero. Indeed, the usual Boltzmann-Nordheim equation was derived in a space homogeneous setting, and up to translations the energy and the velocity of the  condensate can only be zero in such a framework.

The potentials $V_1$ and $V_2$ are given by
\be
\label{potentials}
\begin{cases}
 &V_1 = \alpha(2\rho_n+\rho_s)\\
 &V_2 = \frac{\alpha}{2}(2\rho_n+\rho_s)
\end{cases}
\ee
where $\alpha$ is a positive constant, $\rho_s$ is given by (\ref{psi}) and
\[
 \rho_n = \int_{\R^N}\vphi(t,x,v)\ud v
\]
is the mass of the normal fluid. 

Note that very few is known mathematically about the Boltzmann equation for the bosons. It has been studied only in the spatially homogeneous case; see for example \cite{Lu}, \cite{EM01}, \cite{EMV03}.

In this work, we will (formally) perform the compressible Euler limit on the system (\ref{modele}) together with a semiclassical limit. To do this, we rescale time and space by changing $(t,x)$ to $(\frac{t}{\ep},\frac{x}{\ep})$, where $\ep$ is a small parameter, which gives the scaled system
\begin{equation}
 \label{modeleps}
\begin{cases}
 & \displaystyle{\partial_t \varphi + v.\nabla_x \varphi - \nabla_x V_1.\nabla_v \varphi = \frac{1}{\ep}Q(\varphi) + \frac{\rho_s}{\ep} Q_1(\varphi)} \\
 & \displaystyle{i\tilde h\partial_t \Psi = -\frac{\tilde h^2}{2}\Delta_x \Psi  + V_2\Psi + \frac{i\tilde h}{2\ep}Q_2(\varphi)\Psi }.
\end{cases}
\end{equation}
where $\tilde h=\ep h$, and we want to determine the asymptotics of this system as $\tilde h\to 0$ and $\ep\to 0$.

In Section \ref{properties}, we detail the main physical properties of the model. Then, in Section \ref{hydrodynamic limit}, we obtain a fluid description of the Bose-Einstein condensate and the non-condensate phase. We therefore study the fluid system. In Section \ref{hyperbolicity}, we get some conditions for the system to be hyperbolic, and then we show the existence of some local strong solutions. In Section \ref{weak solutions}, we finally give some details about weak solutions and the propagation of shocks.

\section{Properties of the model}
\label{properties}

The possibility of obtaining a fluid description of the Bose gas governed by (\ref{modele}) is a consequence of some fundamental properties of this system. First, like for the classical Boltzmann and Schr\"odinger equations, we can write the macroscopic equations wich govern the mass, momentum and energy for both parts of the gas. Then, we characterize the distributions $\vphi$ which minimize the entropy and thus are expected to be attractors of the dynamics in the fast relaxation limit. These two informations will allow us to perform the compressible Euler limit in Section \ref{hydrodynamic limit}.

\subsection{Conservation laws}

Integrating the kinetic equation in (\ref{modele}) against $1$, $v$, $|v|^2$, we obtain the equations on the mass, momentum and kinetic energy of the normal fluid. From the Gross-Pitaevskii equation, we derive the equations for the mass, momentum and energy of the condensate.

\subsubsection{Mass}
We plug the expression of $\Psi$ (\ref{psi}) in the second equation of the system (\ref{modele}). After dividing by the exponential, we can write two equations, one for the real part and one for the imaginary part. They read
\be
\label{massus}
  \partial_t \rho_s + \nabla_x.(\rho_s u_s)= \rho_s Q_2(\vphi)
\ee
and
\be
\label{phase}
  \partial_t \theta = \frac{h^2}{2\sqrt{\rho_s}}\Delta_x\sqrt{\rho_s}-\frac{|u_s|^2}{2}-V_2
\ee
where 
\[
 u_s = \nabla_x \theta
\]
denotes the bulk velocity of the condensed phase. Equation (\ref{massus}) gives the law governing the mass density of the condensate. 

For the kinetic equation, let us first remark that
\begin{equation}
\label{opcol}
 \int_{\R^N} \begin{pmatrix}
       1\\v\\|v|^2
      \end{pmatrix}
Q_1(\vphi)(v) \ud v = - \begin{pmatrix}
                      1\\ u_s \\ -2 \partial_t \theta - 2V_1
                     \end{pmatrix}
Q_2(\vphi).
\end{equation}
Then, integrating with respect to the velocity over the whole space $\R^N$, we obtain the equation satisfied by the mass of the normal fluid:
\be
\label{massun}
 \partial_t \rho_n + \nabla_x. (\rho_n u_n)=-\rho_s Q_2(\vphi)
\ee
where we defined the bulk velocity of the normal fluid as
\[
 u_n=\frac{1}{\rho_n}\int_{\R^N}vf(t,x,v)\ud v.
\]
We can see that neither $\rho_n$ nor $\rho_s$ are conserved quantities, unless the thermodynamic equilibrium is reached, that is,
\[
 Q_2(\vphi)=0.
\]
However, the total mass
\[
 \rho=\rho_n+\rho_s
\]
satisfies a conservation law without any assumption of equilibrium:
\be
\label{masstotal}
 \partial_t \rho + \nabla_x.(\rho_nu_n+\rho_su_s)=0.
\ee
This means that the normal fluid and the condensate exchange mass until they are at thermodynamic equilibrium.

\subsubsection{Momentum}

Integrating the kinetic equation against $v$, we get
\be
\label{momentumun}
 \partial_t (\rho_n u_n) + \nabla_x . \left(\int_{\R^N} v\otimes v \vphi \ud v\right) = -\rho_n \nabla_x V_1-\rho_s u_s Q_2(\vphi).
\ee
For the condensed part, we differentiate equation (\ref{phase}) with respect to $x$, and, thanks to the mass equations (\ref{massus}) and (\ref{massun}), we can write
\[
\begin{split}
 \partial_t(\rho_s u_s) + \nabla_x.(\rho_s u_s \otimes u_s)&= -\rho_s \nabla_x V_2+ \rho_s u_s Q_2(\vphi) \\
&~~~~+ \frac{h^2}{2}\nabla_x.\left( \sqrt{\rho_s}\nabla_x^2 \sqrt{\rho_s}-\nabla_x\sqrt{\rho_s}\otimes \nabla_x\sqrt{\rho_s}\right).
\end{split}
\]
Plugging the expression of the potentials (\ref{potentials}), we can see that the total momentum satisfies
\bee
\begin{split}
 \partial_t (\rho_nu_n+\rho_su_s)& +\nabla_x.\left(\int_{\R^N} v\otimes v \vphi \ud v +\rho_s u_s \otimes u_s+ \alpha(2\rho_n+\rho_s)I_N\right)\\
&= \frac{h^2}{2}\nabla_x.\left( \sqrt{\rho_s}\nabla_x^2 \sqrt{\rho_s}-\nabla_x\sqrt{\rho_s}\otimes \nabla_x\sqrt{\rho_s}\right)
\end{split}
\eee
where $I_N$ is the $N\times N$ identity matrix.

The momentum of each fluid is not preserved even at thermodynamic equilibrium; however, the total momentum obeys a conservation law. This means that, even at the equilibrium, the two parts of the fluid exchange some momentum.
\subsubsection{Energy}

We now integrate the kinetic equation against $|v|^2$:

\[
\begin{split}
\partial_t &\left( \int_{\R^N} |v|^2  \vphi \ud v \right) + \nabla_x. \left( \int_{\R^N} v|v|^2 \vphi \ud v \right) = -2 \rho_n u_n .\nabla_x V_1\\
&-\Big(\rho_s|u_s|^2 +\rho_s (2 V_2- V_1)- h^2\sqrt{\rho_s}\Delta_x \sqrt{\rho_s}-\rho_sV_1\Big)Q_2(\vphi).
\end{split}
\]
For the condensed part, the energy writes $h^2|\nabla_x\Psi|^2$, and satisfies
\be
\label{energypsi}
\begin{split}
 \partial_t h^2|\nabla_x\Psi|^2 &=h^3\Re(i\nabla_x(\Delta_x \Psi).\nabla_x\bar\Psi) -2h\Im(\bar\Psi\nabla_x\Psi) \nabla_x V_2\\
&\quad+\frac{h^2}{2}\nabla_x\rho_s.\nabla_x Q_2(\vphi)+h^2|\nabla_x\Psi|^2 Q_2(\vphi).
\end{split}
\ee
Computing
\bee
 h^3\Re(i\nabla_x(\Delta_x \Psi).\nabla_x\bar\Psi)
\eee
and since
\[
 h^2|\nabla_x\Psi|^2=\rho_s|u_s|^2+h^2|\nabla_x\sqrt{\rho_s}|^2,
\]
it follows that
\bee
\begin{split}
 \partial_t \left(\rho_s|u_s|^2+h^2|\nabla_x\sqrt{\rho_s}|^2\right)&+\nabla_x.(\rho_s|u_s|^2u_s)=-2\rho_s u_s \nabla_x V_2+\rho_s|u_s|^2Q_2(\vphi)\\
&+\frac{h^2}{2}\nabla_x\rho_s.\nabla_x Q_2(\vphi)+h^2|\nabla_x\sqrt{\rho_s}|^2Q_2(\vphi)\\
&+3h^2\nabla_x.(\nabla_x\sqrt{\rho_s}\otimes\nabla_x\sqrt{\rho_s}:u_s)\\
&-h^2\nabla_x.(\sqrt{\rho_s}u_s\Delta_x \sqrt{\rho_s} )+h^2\nabla_x.(\nabla_x\sqrt{\rho_s}\nabla_x.(\sqrt{\rho_s}u_s)).
\end{split}
\eee
On the other hand, the equation for the potential energy writes:
\bee
\begin{split}
\partial_t \frac{\alpha}{2}(2\rho_n+\rho_s)^2+&\nabla_x.(\alpha(2\rho_n+\rho_s)(2\rho_nu_n+\rho_su_s))\\
&=\alpha(2\rho_nu_n+\rho_su_s).\nabla_x(2\rho_n+\rho_s)-\alpha\rho_s(2\rho_n+\rho_s)Q_2(\vphi)\\
&=2\rho_n u_n\nabla_xV_1+2\rho_su_s\nabla_xV_2-\rho_sV_1Q_2(\vphi)
\end{split}
\eee
thanks to expressions (\ref{potentials}) of the potentials. If we add this equation the the equation for the energy of the condensed part (\ref{energypsi}), we note that we obtain a conservation equation, plus some terms coming from the interaction with the normal fluid. This conservation property is reminiscent from the energy conservation of the solutions to the Schr\"odinger equation.

Collecting all the pieces together, we get the full energy equation:
\begin{equation}
 \label{energy}
\begin{split}
 \partial_t &\left( \int_{R^N} |v|^2  \vphi \ud v +\rho_s|u_s|^2+h^2|\nabla_x\sqrt{\rho_s}|^2 +\frac{\alpha}{2}(2\rho_n+\rho_s)^2\right)\\
&+\nabla_x. \left( \int v|v|^2 \vphi \ud v +\rho_s|u_s|^2u_s+\alpha(2\rho_n+\rho_s)(2\rho_nu_n+\rho_su_s)\right)\\
&=h^2T(\vphi,\rho_s,u_s)
\end{split}
\end{equation}
with
\begin{equation*}
\begin{split}
T(\vphi,\rho_s,u_s)=&\nabla_x.(\sqrt{\rho_s}Q_2(\vphi)\nabla_x \sqrt{\rho_s})+3\nabla_x.(\nabla_x\sqrt{\rho_s}\otimes\nabla_x\sqrt{\rho_s}:u_s)\\
&- \nabla_x.(\sqrt{\rho_s}u_s\Delta_x\sqrt{\rho_s})+\nabla_x.(\nabla_x\sqrt{\rho_s}\nabla_x.(\sqrt{\rho_s}u_s)).
\end{split}
\end{equation*}

We check that the total energy is conserved, and that all the terms which do not have a clear physical interpretation formally vanish in the limit $h\to 0$.

\subsection{Thermodynamic equilibrium}

Letting $\ep\to 0$ in (\ref{modeleps}) suggests that, in the fast relaxation limit, $\vphi$ satisfies, for almost every $v\in \R^N$,
\[
 Q(\vphi)+\rho_sQ_1(\vphi)=0.
\]
We therefore need to characterize such functions:

\begin{prop}
 Let $\vphi\equiv \vphi(v)$ be a smooth, positive valued function on $\R^N$. Let $u_s \in \R^N$, $V_1\in\R$ be some given constants and let
\[
 \rho_n=\int_{v\in\R^N}\vphi(v)\ud v,\qquad u_n=\frac{1}{\rho_n}\int_{\R^N}v\vphi(v)\ud v.
\]
Define $Q(\vphi)$ and $Q_1(\vphi)$ by (\ref{Q}) and (\ref{Q1}). Assume that $h=0$, so that $C_2$ is the set of all the velocities $v_*,v_*'$ such that
\[
 \begin{cases}
  &u_s+v_*'=v+v_*\\
  &|u_s|^2+V_1+|v_*'|^2=|v|^2+|v_*|^2.
 \end{cases}
\]
Then,
\bee
 Q(\vphi)+\rho_sQ_1(\vphi)=0 
\eee
if and only if $\vphi$ is a bosonian, i.e.
\be
\label{equilibre}
\vphi(v)=\frac{M}{1-M},
\ee
where
\[
M=e^{-\frac{|v-u_n|^2-|u_s-u_n|^2+V1}{2T}},
\]
for some positive constant $T$.
\end{prop}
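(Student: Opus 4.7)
The plan is to prove the equivalence via an H-theorem style entropy dissipation. I would first set $M := \vphi/(1+\vphi)$, so $\vphi = M/(1-M)$ and $1+\vphi = 1/(1-M)$. A short algebraic computation rewrites the three collision integrands in factorized form: the Boltzmann--Nordheim integrand equals
\begin{equation*}
\vphi'\vphi_*'(1+\vphi)(1+\vphi_*) - \vphi\vphi_*(1+\vphi')(1+\vphi_*') = \frac{M'M_*' - MM_*}{(1-M)(1-M_*)(1-M')(1-M_*')},
\end{equation*}
while the two integrands defining $Q_1$ reduce respectively to $\frac{M'M_*' - M}{(1-M)(1-M')(1-M_*')}$ on $C_1$ and $\frac{M_*' - MM_*}{(1-M)(1-M_*)(1-M_*')}$ on $C_2$.

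The core step is to multiply the equation $Q(\vphi) + \rho_s Q_1(\vphi) = 0$ by $\log M(v) = \log(\vphi/(1+\vphi))$ and integrate in $v$. For the $Q$ piece, the classical fourfold symmetrization based on (\ref{microscons}) produces
\begin{equation*}
\int \log M\, Q(\vphi)\, \ud v = -\frac{1}{4}\int \log\frac{M'M_*'}{MM_*}\cdot\frac{M'M_*' - MM_*}{(1-M)(1-M_*)(1-M')(1-M_*')}\, \ud v\, \ud v_*\, \ud n \leq 0.
\end{equation*}
For the $Q_1$ piece, the key observation is that when $h=0$ the phase equation (\ref{phase}) together with (\ref{potentials}) fixes $-2\partial_t\theta - 2V_1$ to a constant depending only on $u_s$ and $V_1$, so the energy constants defining $C_1$ and $C_2$ coincide. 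The relabeling $(v, v_*, v_*') \mapsto (v_*', v, v_*)$ then maps $C_2$ bijectively onto $C_1$, and combining the two contributions (using the $v' \leftrightarrow v_*'$ symmetry inside $C_1$) yields the single dissipation
\begin{equation*}
\int \log M\, Q_1(\vphi)\, \ud v = -\int_{C_1}\log\frac{M'M_*'}{M}\cdot\frac{M'M_*' - M}{(1-M)(1-M')(1-M_*')}\, \ud v\, \ud v'\, \ud v_*' \leq 0.
\end{equation*}
Since $\rho_s > 0$, the identity $Q + \rho_s Q_1 = 0$ forces both dissipations to vanish, yielding $M'M_*' = MM_*$ on the elastic collision manifold and $M'M_*' = M$ on $C_1$.

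From the first relation, the classical collisional-invariant argument gives $\log M(v) = A + B\cdot v + C|v|^2$, so after completing the square $M(v) = K e^{-|v-u|^2/(2T)}$ with $T > 0$; computing the first velocity moment of $\vphi = M/(1-M) = \sum_{k\geq 1} M^k$ (Bose geometric series) identifies the location $u$ with $u_n$. The second relation, evaluated on this Gaussian using $u_s + v = v' + v_*'$ and the associated quadratic energy constraint, reduces to a single scalar equation pinning down $K = \exp((|u_s - u_n|^2 - V_1)/(2T))$, reassembling into the bosonian form (\ref{equilibre}). The converse implication is then a direct verification of the three factorization identities: $\log M$ is affine in $1, v, |v|^2$, and the specific constant term is precisely tailored to enforce $M'M_*' = M$ on $C_1$ and $M_*' = MM_*$ on $C_2$. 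The main obstacle is the clean combination of the two $Q_1$ pieces into a single non-positive dissipation, which crucially relies on exploiting the $h = 0$ phase equation to identify the energy constants of $C_1$ and $C_2$.
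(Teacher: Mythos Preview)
Your proof is correct and follows essentially the same H-theorem approach as the paper: multiply by $\log(\vphi/(1+\vphi))$, integrate, exploit the sign of $(x-y)(\log y-\log x)$ to force the $Q$ and $Q_1$ dissipations to vanish separately, then extract the Maxwellian form from $Q=0$ and the remaining constant from $Q_1=0$. One minor remark: the energy constants appearing in $C_1$ and $C_2$ are identical by their original definitions (both equal $-2\partial_t\theta-2V_1$), so invoking the phase equation to make them coincide is unnecessary---the relabeling symmetry between the $C_1$ and $C_2$ integrals that you use holds regardless of $h$.
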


\begin{proof}
The identity
\[
 Q(\vphi)+\rho_sQ_1(\vphi)=0
\]
implies that
\[
 \int_{\R^N}\left(Q(\vphi)+\rho_sQ_1(\vphi)\right)\log \left(\frac{\vphi}{1+\vphi}\right)\ud v=0.
\]
Using the symmetries of the collision integrals, we have
\[
\begin{split}
 &\int_{\R^N}Q(\vphi)\log \left(\frac{\vphi}{1+\vphi}\right)\ud v\\
 &=\int_{\R^N}\int_{\R^N}\int_{S^{N-1}}(1+\vphi)(1+\vphi_*)(1+\vphi')(1+\vphi_*')\left(\frac{\vphi'}{1+\vphi'}\frac{\vphi_*'}{1+\vphi_*'}-\frac{\vphi}{1+\vphi}\frac{\vphi_*}{1+\vphi_*}\right)\\
&\times\left(\log\left(\frac{\vphi}{1+\vphi}\frac{\vphi_*}{1+\vphi_*}\right)-\log\left( \frac{\vphi'}{1+\vphi'}\frac{\vphi_*'}{1+\vphi_*'}\right)\right)\ud v\ud v_* \ud n
\end{split}
\]
and
\[
 \begin{split}
 & \int_{\R^N}Q_1(\vphi)\log \left(\frac{\vphi}{1+\vphi}\right)\ud v\\
 &=\int_{\R^N}\int_{C_2}(1+\vphi)(1+\vphi_*)(1+\vphi_*')\left(\frac{\vphi_*'}{1+\vphi_*'}-\frac{\vphi}{1+\vphi}\frac{\vphi_*}{1+\vphi_*}\right)\\
&\times\left(\log\left(\frac{\vphi}{1+\vphi}\frac{\vphi_*}{1+\vphi_*}\right)-\log\left(\frac{\vphi_*'}{1+\vphi_*'}\right)\right)\ud v\ud v_* \ud v_*'.
 \end{split}
\]
But for all $x,y>0$ we have
\[
 (x-y)(\log y-\log x) \leq 0
\]
with equality if and only if $x=y$. Hence both integrands are nonpositive functions, and we claim that:
\[
  Q(\vphi)+\rho_sQ_1(\vphi)=0  
\]
if and only if
\bee
Q(\vphi)=0 \quad\textrm{a.e. and}\quad Q_1(\vphi)=0 \quad\textrm{a.e.}.
\eee
For the same reason, $Q(\vphi)$ vanishes if and only if
\be
\label{bosonian}
\frac{\vphi'}{1+\vphi'}\frac{\vphi_*'}{1+\vphi_*'}=\frac{\vphi}{1+\vphi}\frac{\vphi_*}{1+\vphi_*}
\ee
for almost every $v,v_*\in\R^N$, $n\in S^{N-1}$, with $v',v_*'$ satisfying (\ref{microscons}), and $Q_1(\vphi)$ vanishes only when
\[
\frac{\vphi_*'}{1+\vphi_*'}=\frac{\vphi}{1+\vphi}\frac{\vphi_*}{1+\vphi_*}
\]
for almost every $v\in\R^N$ and $v_*,v_*'\in C_2$ with $h=0$ ($u_s \in \R^N$ and $V_1$ are fixed). 

We therefore have to solve the functional equation (\ref{bosonian}). Thanks to the classical Boltzmann theory, we know that if for all $v,v_*,  v', v_*'$ satisfying (\ref{microscons}) we have 
\[
 ff_*=f'f_*',
\]
then $f$ is a Maxwellian distribution, namely
\[
 f\equiv M=e^{(a|v|^2+b.v+c)}
\]
with $a,c\in\R$ and $b\in\R^N$. Taking $f=\frac{\vphi}{1+\vphi}$ gives that $Q(\vphi)$ vanishes if and only if
\[
\vphi=\frac{M}{1-M}.
\]
The Maxwellian function $M$ can be rewritten
\[
 M=e^{-\frac{|v-u_n|^2}{2T}+\delta}
\]
for some $T>0$ and $\delta\in\R$ using the macroscopic velocity, since
\[
 \int_{\R^N}v\vphi\ud v = \rho_n u_n.
\]

We can compute $\delta$ by plugging this into $Q_1(\vphi)=0$, which leads to
\[
 M_*'=MM_*
\]
for all $v\in\R^N,v_*,v_*'\in C_2$. We thus obtain:
\[
 \frac{|v_*'-u_n|^2}{2T}-\delta=\frac{|v-u_n|^2}{2T}-\delta+\frac{|v_*-u_n|^2}{2T}-\delta
\]
which gives, thanks to the definition of $C_2$,
\[
 \delta=\frac{|u_s-u_n|^2-V_1}{2T},
\]
which is the expected result.
\end{proof}

\begin{rem}
$M$ satisfies
\[
 \frac{\vphi}{1+\vphi}=M,
\]
so that
\[
 0<M<1.
\]
In particular,
\[
 |u_s-u_n|^2\leq\alpha(2\rho_n+\rho_s).
\]
\end{rem}

\begin{rem}
 In \cite{EMV03}, it is shown that the equilibrium solutions of the Boltzmann-Nordheim equation, which are the minimizers of the entropy, are the bosonian distributions (\ref{equilibre}) to which a Dirac function must be added if the mass of the initial data is greater than some critical value. In fact, for initial data having a big mass, the formation of a singularity taking the form of a Dirac function means that a condensate has been created in the fluid. Our goal here is to study a model (\ref{modele}) where a condensate has already occured, and then the corresponding mass has been deleted from the normal component of the fluid. Hence, the initial mass of the normal part of the fluid is assumed to be less than the critical value, and consequently the equilibrium solution is made only of the regular part.
\end{rem}

\section{Formal derivation of the fluid model}
\label{hydrodynamic limit}

Our goal in this section is to derive a fluid description of the Bose gas governed by (\ref{modele}). This will be achieved at formal level in three steps: we first perform the semiclassical limit $h\to 0$. It allows to pass from a quantum to a classical description of the condensed part. Then, rescaling the time and space as in (\ref{modeleps}), we perform the compressible Euler limit by the moment method. We get a set of equations on the masses and momentum of the two fluids, and an equation for the total energy. This system is closed. However, it is a system of $2N+3$ equations and as much unknowns. To simplify the subsequent study, we replace it in the third step with a smaller system using the isentropic approximation.

\subsection{Semiclassical limit}

When performed on the usual Schr\"odinger equation, the semiclassical limit $h\to 0$ allows to recover the laws of classical mechanics from quantum dynamics \cite{carlesbook}. Indeed, $h$ can be seen as a measure of how far we are from classical mechanics. In our case, the limit will allow us to treat the condensed part as a classical fluid.

We assume that all the quantities converge in suitable functional spaces as $h\to 0$. Moreover, we still write $\rho_s, u_s, \theta,\vphi, \rho_n, u_n$ the limits of these quantities. Under these assumptions, the equation for the phase (\ref{phase}) becomes
\bee
  \partial_t \theta = -\frac{|u_s|^2}{2}-V_2.
\eee
Hence, as $h\to 0$, $Q_2(\vphi)$ is changed into
\begin{equation*}
 Q_2(\vphi) = \int_{ C_3} \big[ \vphi(v')\vphi(v_*')-\vphi(v_*)(\vphi(v')+\vphi(v_*')+1)\big]\ud v_* \ud v' \ud v_*'
\end{equation*}
where the new $C_3$ is the set of all velocities $v_*, v', v_*'$ such that
\begin{equation*}
 \begin{cases}
  & u_s + v_* = v' + v_*' \\
  & |u_s|^2 -V_1+ |v_*|^2 = |v'|^2+|v_*'|^2.
 \end{cases}
\end{equation*}
Substituting $Q_2(\vphi)$ with its new expression, the mass equations (\ref{massus}), (\ref{massun}) and (\ref{masstotal}) write the same as before. The momentum equation for the normal fluid (\ref{momentumun}) is also unchanged. The equation for the momentum of the condensed part becomes:
\[
 \partial_t(\rho_s u_s) + \nabla_x.(\rho_s u_s \otimes u_s)= -\rho_s \nabla_x V_2+ \rho_s u_s Q_2(\vphi).
\]
Thus, the total momentum satisfies the conservation law
\bee
 \partial_t (\rho_nu_n+\rho_su_s) +\nabla_x.\left(\int_{\R^N} v\otimes v \vphi \ud v +\rho_s u_s \otimes u_s+ \alpha(2\rho_n+\rho_s)I_N\right)=0.
\eee
Letting $h\to 0$ in the energy equation (\ref{energy}) leads to:
\begin{equation*}
\begin{split}
 \partial_t &\left( \int_{R^N} |v|^2  \vphi \ud v +\rho_s|u_s|^2+\frac{\alpha}{2}(2\rho_n+\rho_s)^2\right)\\
&+\nabla_x. \left( \int v|v|^2 \vphi \ud v +\rho_s|u_s|^2u_s+\alpha(2\rho_n+\rho_s)(2\rho_nu_n+\rho_su_s)\right)=0.
\end{split}
\end{equation*}
Note that in the semiclassical limit, the total energy is conserved.

\subsection{Hydrodynamic limit}

We now perform the change of scale $(t,x)\mapsto (\frac{t}{\ep}, \frac{x}{\ep})$ where $\ep\to 0$ is a small parameter. As can be seen on the system (\ref{modeleps}), this scaling leaves the equations associated to the condensed part unchanged, excepted the collision term $Q_2(\vphi)$ which gets a factor $\frac{1}{\ep}$. The small parameter $\ep$ is known as the Knudsen number, which is proportional to the mean free path, that is, the average length covered by the particules of the gas between two collisions. Letting this parameter go to 0 means that the collisions occur on a time scale which is very small compared with the observation time scale, so that one can consider that the local thermodynamic equilibrium is reached almost instantaneously.

The principle of the moment method for the compressible Euler limit is as follows: multiplying the kinetic equation in (\ref{modeleps}) by $\ep$ and letting $\ep\to 0$ suggests that $\vphi$ converges to the bosonian (\ref{equilibre}). We thus replace $\vphi$ by its limiting value in the moment equations, namely, in the equations for the mass, momentum and kinetic energy of the normal fluid. We do the same for the equations which govern the mass, momentum and energy of the condensate, and we obtain a closed set of hydrodynamic equations describing the macroscopic evolution of a Bose gas.

We assume that all the macroscopic quantities converge as $\ep \to 0$ in suitable functional spaces. The only technical point is to compute some integrals, in particular those associated with the momentum and heat flux. We have the identity
\[
 \int_{\R^N} v\otimes v \frac{M}{1-M} \ud v= \rho_nu_n\otimes u_n + pI_N
\]
defining the pressure
\[
 p=\frac{1}{N}\int_{\R^N}|v|^2\frac{\beta e^{-\frac{|v|^2}{2T}}}{1-\beta e^{-\frac{|v|^2}{2T}}}
\]
where
\[
 \beta = e^{\frac{|u_s-u_n|^2-V_1}{2T}}\qquad\textrm{and}\qquad0<\beta<1.
\]
Using obvious symmetry properties, we deduce that
\[
 \int_{\R^N}|v|^2\frac{M}{1-M} \ud v=\rho_n |u_n|^2 + Np.
\]
In the same way, we have
\[
 \int_{\R^N}v|v|^2\frac{M}{1-M} \ud v=\rho_n |u_n|^2 u_n + (N+2)p u_n.
\]
At the end, the hydrodynamic limit of the system (\ref{modele}) is given by the following set of equations:
\be
\label{systemelimite}
\begin{cases}
&\partial_t \rho_n + \nabla_x. (\rho_n u_n)=0\\
&\partial_t \rho_s + \nabla_x.(\rho_s u_s)= 0\\
&\partial_t(\rho_n u_n)+\nabla_x.(\rho_n u_n\otimes u_n +pI_N)=-\alpha\rho_n\nabla_x(2\rho_n+\rho_s)\\
&\partial_t(\rho_s u_s)+\nabla_x.(\rho_s u_s\otimes u_s)=-\frac{\alpha}{2}\rho_s\nabla_x(2\rho_n+\rho_s)\\
&\partial_t(\frac{1}{2}\rho_n |u_n|^2 + \frac{1}{2}\rho_s |u_s|^2+ \frac{N}{2}p+\frac{\alpha}{4}(2\rho_n+\rho_s)^2))\\
&\quad+\nabla_x.\Big(\frac{1}{2}\rho_n |u_n|^2 u_n + \frac{1}{2}\rho_s |u_s|^2 u_s + \frac{(N+2)}{2}pu_n \\
&\qquad\qquad\qquad\qquad+\frac{\alpha}{2} (2\rho_n + \rho_s)(2\rho_n u_n + \rho_s u_s)\Big)=0
\end{cases}
\ee

This is a kind of two-phases Euler system, the second fluid (the superfluid) being pressureless. They do not exchange mass, while they exchange some momentum, contrary to what occurs in the model proposed by Landau \cite{landau}.

\subsection{The isentropic approximation}

The system (\ref{systemelimite}) is somehow very complex insofar as it is not a system of conservation laws, and defining weak solutions for such a system is complicated. Consequently, we approximate it with a simpler system, which can be written as a system of conservation laws. To do this, we first claim that the kinetic equation in (\ref{modeleps}) is endowed with a natural entropy $S(\vphi)$, which is defined by:
\[
 S(\vphi) = \frac{1}{\rho_n}\int_{\R^N}\left((1+\vphi)\log (1+\vphi)-\vphi\log \vphi \right)\ud v.
\]
We will show that in the case when that entropy is approximatively constant, we can compute the pressure law, thus reducing the number of unknowns, and obtain a closed system of conservation laws.

The entropy satisfies the following equation:
\be
\label{Htheorem}
\begin{split}
 \partial_t (\rho_n S(\vphi)) &+ \nabla_x.\left(\int_{\R^N}v\left((1+\vphi)\log (1+\vphi)-\vphi\log \vphi\right)\ud v\right) \\
&= -\frac{1}{\ep}\int_{\R^N}\left(Q(\vphi)+\rho_s Q_1(\vphi)\right)\log \left(\frac{\vphi}{1+\vphi}\right)\ud v \geq 0
\end{split}
\ee
which is analogous to Boltzmann's H-theorem. When $\ep \to 0$, the density function $\vphi$ goes to its equilibrium value $\frac{M}{1-M}$ and the entropy (now called $S$) becomes
\be
\label{S}
 \rho_n S =-\int_{\R^N}\left(\frac{M}{1-M}\log M +\log(1-M) \right).
\ee
Then, it comes
\[
 \int_{\R^N}v\left(\frac{1}{1-M}\log \frac{1}{1-M} -\frac{M}{1-M}\log\frac{M}{1-M}\right)\ud v=\rho_nSu_n
\]
and $S$ satisfies
\bee
 \partial_t (\rho_n S) + \nabla_x . (\rho_n S u_n)=0.
\eee
Moreover, thanks to the mass conservation for the normal fluid (first equation in the system (\ref{systemelimite})), we get
\be
\label{transportS}
\partial_t S + u_n.\nabla_x S =0
\ee
which is a transport equation. Note that $S\equiv S_0$, where $S_0$ is a constant, is a trivial solution of (\ref{transportS}).

\begin{rem}
\label{rementropy}
 In the fast relaxation limit $\ep\to 0$, the entropy dissipation is expected to concentrate on shock solutions, as occurs for the classical Boltzmann equation. Hence, inequality (\ref{Htheorem}) is expected to become an equality in the limit as long as smooth solutions are considered. In other words, (\ref{transportS}) should be a combination of the other equations of the system.
\end{rem}

Using (\ref{S}), we can compute the entropy with respect to  $\rho_n, p, \beta$ and $T$:
\[
  S = N\frac{p}{2T\rho_n}-\log \beta -\frac{T^{N/2}}{\rho_n}\int_{\R^N}\log (1-\beta e^{-\frac{|v|^2}{2}})\ud v.
\]
From the expressions of the density and pressure for the normal fluid:
\[
 \rho_n=T^{N/2}\int_{\R^N}\frac{\beta e^{-\frac{|v|^2}{2}}}{1-\beta e^{-\frac{|v|^2}{2}}}\ud v,\quad p= \frac{1}{N} T^{N/2+1} \int_{\R^N}|v|^2\frac{\beta e^{-\frac{|v|^2}{2}}}{1-\beta e^{-\frac{|v|^2}{2}}}\ud v,
\]
remarking that, thanks to an integration by parts,
\[
  \int_{\R^N}|v|^2 \frac{\beta e^{-\frac{|v|^2}{2}}}{1-\beta e^{-\frac{|v|^2}{2}}}\ud v =-\frac{N}{\beta}\int_{\R^N}\log(1-\beta e^{-\frac{|v|^2}{2}})\ud v,
\]
we get
\[
 S= \left( \frac{N}{2}+1 \right) \frac{p}{T\rho_n}-\log \beta.
\]
If we write
\be
\label{rhop}
 \rho_n=T^{N/2}F_0(\beta),\qquad p=\frac{1}{N}T^{N/2+1}F_2(\beta),
\ee
we obtain at the end
\[
 S = \left(\frac{1}{2}+\frac{1}{N}\right)\frac{F_2(\beta)}{F_0(\beta)}-\log \beta.
\]
It comes out that the entropy depends only on the variable $\beta$. If we show that $\beta \mapsto S(\beta)$ is one-to-one, we will be able to say that if $S$ remains constant, so is $\beta$; hence, thanks to (\ref{rhop}), we will obtain the following pressure law:
\be
\label{pressurelaw}
 p=\tilde c_N\rho_n^{\frac{N+2}{N}}.
\ee
with
\[
 \tilde c_N=\frac{1}{N}\frac{F_2(\beta_0)}{(F_0(\beta_0))^{1+2/N}}>0
\]
for some constant $\beta_0 \in (0,1)$.
\begin{lem}
\label{lemS}
 The function $\beta \in(0,1)\mapsto S(\beta)$ is one-to-one.
\end{lem}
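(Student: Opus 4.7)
The plan is to prove monotonicity of $S$ on $(0,1)$, which immediately gives injectivity. I would first rewrite $S$ in terms of the polylogarithms $\mathrm{Li}_s(\beta) := \sum_{k\geq 1} \beta^k/k^s$. Expanding the Bose factor as a geometric series, $\frac{\beta e^{-|v|^2/2}}{1-\beta e^{-|v|^2/2}} = \sum_{k\geq 1}\beta^k e^{-k|v|^2/2}$, and evaluating the Gaussian integrals term by term via $\int_{\R^N} e^{-k|v|^2/2}\ud v = (2\pi/k)^{N/2}$ and $\int_{\R^N} |v|^2 e^{-k|v|^2/2}\ud v = (N/k)(2\pi/k)^{N/2}$, one obtains
\[
F_0(\beta) = (2\pi)^{N/2}\,\mathrm{Li}_{N/2}(\beta), \qquad F_2(\beta) = N(2\pi)^{N/2}\,\mathrm{Li}_{N/2+1}(\beta),
\]
so that, with $a := N/2$,
\[
S(\beta) = (a+1)\,\frac{\mathrm{Li}_{a+1}(\beta)}{\mathrm{Li}_a(\beta)} - \log \beta.
\]

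Next, using the identity $\beta\,\mathrm{Li}_s'(\beta) = \mathrm{Li}_{s-1}(\beta)$ (immediate from term-wise differentiation), a short computation yields
\[
\beta\,\mathrm{Li}_a(\beta)^2\,S'(\beta) = a\,\mathrm{Li}_a(\beta)^2 - (a+1)\,\mathrm{Li}_{a-1}(\beta)\,\mathrm{Li}_{a+1}(\beta).
\]
The crux of the argument is then a Cauchy--Schwarz inequality on the series themselves: writing $\beta^k/k^a = (\beta^{k/2}/k^{(a-1)/2})(\beta^{k/2}/k^{(a+1)/2})$ gives
\[
\mathrm{Li}_a(\beta)^2 \leq \mathrm{Li}_{a-1}(\beta)\,\mathrm{Li}_{a+1}(\beta).
\]
Plugging this bound into the previous display produces $\beta\,\mathrm{Li}_a(\beta)^2\,S'(\beta) \leq -\mathrm{Li}_{a-1}(\beta)\mathrm{Li}_{a+1}(\beta) < 0$ for every $\beta \in (0,1)$; hence $S'<0$ on $(0,1)$, so $S$ is strictly decreasing and, in particular, one-to-one.

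The only delicate point I anticipate is justifying the polylogarithm reformulation when $N=1$: then $a-1=-1/2$, and $\mathrm{Li}_{-1/2}(\beta)=\sum_{k\geq 1} k^{1/2}\beta^k$ still converges on $(0,1)$ by the root test, so the Cauchy--Schwarz step remains valid. Once the term-by-term integration and differentiation are justified by absolute convergence uniformly on compacts of $(0,1)$ (which is routine), the argument is essentially mechanical; the conceptual content lies entirely in the Cauchy--Schwarz estimate on consecutive polylogarithms.
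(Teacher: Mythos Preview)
Your proof is correct and follows essentially the same strategy as the paper: compute $S'(\beta)$, expand the Bose integrals as power series, and show that the key inequality $\mathrm{Li}_a(\beta)^2 \le \mathrm{Li}_{a-1}(\beta)\,\mathrm{Li}_{a+1}(\beta)$ forces $S'<0$. The only difference is cosmetic: the paper proves this inequality coefficient-wise on the Cauchy products (via what amounts to AM--GM on each term), whereas you invoke Cauchy--Schwarz on the full series directly, which is cleaner.
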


\begin{proof}
 
We compute the derivative of $S$ with respect to $\beta$:
\[
 S'(\beta) = \left(\frac{1}{2}+\frac{1}{N}\right) \frac{F_2'(\beta)F_0(\beta)-F_2(\beta)F_0'(\beta)}{F_0(\beta)^2}-\frac{1}{\beta}.
\]
Hence, 
\bee
\begin{split}
 S'(\beta)=&\left(\frac{1}{2}+\frac{1}{N}\right)\frac{1}{F_0(\beta)^2}\Bigg( \int_{\R^N}|v|^2\frac{e^{-\frac{|v|^2}{2}}}{\left(1-\beta e^{-\frac{|v|^2}{2}}\right)^2}\ud v\int_{\R^N} \frac{ \beta e^{-\frac{|v|^2}{2}}}{1-\beta e^{-\frac{|v|^2}{2}}}\ud v\\
&-\int_{\R^N}|v|^2 \frac{ e^{-\frac{|v|^2}{2}}}{1-\beta e^{-\frac{|v|^2}{2}}}\ud v \int_{\R^N} \frac{\beta e^{-\frac{|v|^2}{2}}}{\left(1-\beta e^{-\frac{|v|^2}{2}}\right)^2}\ud v\Bigg)-\frac{1}{\beta}.\\
\end{split}
\eee
We want to show that $S'(\beta)$ is negative. To do this, we expand the integrands into power series. We get for the simplest one
\[
 \int_{\R^N} \frac{ e^{-\frac{|v|^2}{2}}}{1-\beta e^{-\frac{|v|^2}{2}}}\ud v = \int_{\R^N}\sum_{n=0}^\infty \beta^ne^{-(n+1)\frac{|v|^2}{2}}\ud v = \sum_{n=0}^\infty \beta^n \int_{\R^N}e^{-(n+1)\frac{|v|^2}{2}}\ud v,
\]
that is,
\[
  \int_{\R^N} \frac{ e^{-\frac{|v|^2}{2}}}{1-\beta e^{-\frac{|v|^2}{2}}}\ud v =\left(\int_{\R^N}e^{-\frac{|v|^2}{2}}\ud v\right) \sum_{n=0}^\infty \beta^n\frac{1}{(n+1)^{N/2}}.
\]
We compute the other integrals using the same tool, and the entropy derivative becomes
\[
 \begin{split}
  S'(\beta)&= \beta\left(\frac{1}{2}+\frac{1}{N}\right)\left(\int_{\R^N}|v|^2 e^{-\frac{|v|^2}{2}}\ud v\right)\left(\int_{\R^N}e^{-\frac{|v|^2}{2}}\ud v\right) \frac{1}{F_0(\beta)^2}\times\\
&\left(\left(\sum_{n=0}^\infty \beta^n\frac{1}{(n+1)^{N/2}}\right)^2-\sum_{n=0}^\infty \beta^n\frac{1}{(n+1)^{N/2+1}}\sum_{n=0}^\infty \beta^n\frac{1}{(n+1)^{N/2-1}} \right)-\frac{1}{\beta}.
 \end{split}
\]
We can calculate the products of the summations:
\[
 \left(\sum_{n=0}^\infty \beta^n\frac{1}{(n+1)^{N/2}}\right)^2=\sum_{n=0}^\infty c_n \beta^n
\]
with
\[
 c_n=\sum_{k=0}^n \frac{1}{(k+1)^{N/2}(n-k+1)^{N/2}},
\]
and
\[
 \sum_{n=0}^\infty \beta^n\frac{1}{(n+1)^{N/2+1}}\sum_{n=0}^\infty \beta^n\frac{1}{(n+1)^{N/2-1}}=\sum_{n=0}^\infty \tilde c_n \beta^n
\]
with
\[
 \tilde c_n=\sum_{k=0}^n \frac{1}{(k+1)^{N/2+1}(n-k+1)^{N/2-1}}.
\]
A simple calculation shows that
\[
 \tilde c_n \geq c_n.
\]
But since
\[
  S'(\beta)=\left(\frac{1}{2}+\frac{1}{N}\right)\int_{\R^N}|v|^2e^{-\frac{|v|^2}{2}}\ud v \int_{\R^N}e^{-\frac{|v|^2}{2}}\ud v\frac{\beta}{F_0(\beta)^2} \left( \sum_{n=0}^\infty (c_n-\tilde c_n)\beta^n \right) -\frac{1}{\beta},
\]
it comes
\[
 S'(\beta)<0,
\]
and thus $\beta \mapsto S(\beta)$ is one to one.
\end{proof}

From now on, we will assume that the dimension of the space is $N=3$. Moreover, we assume that the gas is translation invariant along the y- and z-axis, and that the y- and z-components of the velocities $u_n$ and $u_s$ are zero. Under these conditions, the system under consideration becomes one-dimensional
\be
\label{syst}
\begin{cases}
&\partial_t \rho_n + \partial_x (\rho_n u_n)=0\\
&\partial_t \rho_s + \partial_x(\rho_s u_s)= 0\\
&\partial_t(\rho_n u_n)+\partial_x(\rho_n u_n^2 +\tilde c \rho_n^{5/3})=-\alpha\rho_n\partial_x(2\rho_n+\rho_s)\\
&\partial_t(\rho_s u_s)+\partial_x(\rho_s u_s^2)=-\frac{\alpha}{2}\rho_s\partial_x(2\rho_n+\rho_s)\\
\end{cases}
\ee
where $\tilde c=\tilde c_3$. 

If $(\rho_n,\rho_s,u_n,u_s)$ is a regular solution of (\ref{syst}), then $(\rho_n,\rho_s,u_n,u_s,p\equiv\tilde c \rho_n^{5/3})$ will be a regular solution of (\ref{systemelimite}). However, this is not true any more when dealing with weak solutions. The reason is that the entropy  does not remain constant along shock waves as mentioned in remark \ref{rementropy}. Hence, system (\ref{syst}) is actually an approximation of system (\ref{systemelimite}), and not just a particular case of it.

\section{Hyperbolicity and regular solutions}
\label{hyperbolicity}

In view of giving some mathematical justifications to the derivation of the hydrodynamic model (\ref{systemelimite}) and of its isentropic approximation (\ref{syst}), it is necessary to have a good understanding of its structure and a good mathematical theory of existence and stability. We are therefore interested in this section in finding smooth solutions to (\ref{syst}).

In Section \ref{weak solutions}, we will deal with weak solutions of this system. As it will be more convenient to work with a system of conservation laws, we introduce right now the following system of equations:
\begin{equation}
\label{conservation}
 \begin{cases}
  & \partial_t \rho_n + \partial_x (\rho_n u_n)=0 \\
  & \partial_t \rho_s + \partial_x (\rho_s u_s)=0 \\
  & \partial_t (\rho_n u_n + \rho_s u_s)+ \partial_x (\rho_n u_n^2 + \rho_s u_s^2 + \tilde c \rho_n^{5/3}+ \frac{\alpha}{4}(2\rho_n+\rho_s)^2 )=0 \\
  %&\quad \quad \\
  & \partial_t (\frac{1}{2}\rho_n u_n^2 + \frac{1}{2}\rho_s u_s^2+ \frac{3}{2}\tilde c \rho_n^{5/3}+\alpha(\rho_n+\frac{1}{2}\rho_s)^2)\\
  &\quad\quad + \partial_x(\frac{1}{2}\rho_n u_n^3 + \frac{1}{2}\rho_s u_s^3 + \frac{5}{2}\tilde c\rho_n^{5/3}u_n+\alpha (\rho_n + \frac{1}{2}\rho_s)(2\rho_n u_n + \rho_s u_s))=0 \\
  %&\quad\quad 
 \end{cases}
\end{equation}
Actually, the inversibility of the function
\bee
\begin{split}
 F: \R_+^* \times \R_+^* \times \R \times \R &\to \R_+^* \times \R_+^* \times \R \times \R_+^* \\
(\rho_n,\rho_s,u_n,u_s)&\mapsto (\rho_n,\rho_s,\rho_n u_n+\rho_s u_s, \frac{1}{2}\rho_n u_n^2 + \frac{1}{2}\rho_s u_s^2 \\
&\quad\quad\quad+\frac{3}{2}\tilde c \rho_n^{\frac{5}{3}}+\alpha(\rho_n+\frac{1}{2}\rho_s)^2)
\end{split}
\eee
(whose jacobian is $j_F = \rho_n \rho_s(u_s-u_n)$) on the sets
\[
 \mathcal O_+ = \left\{ U=\begin{pmatrix}\rho_n \\ \rho_s \\ u_n \\ u_s \end{pmatrix} \quad \textrm{s.t.}\quad \rho_n,\rho_s>0\quad\textrm{and}\quad u_n> u_s \right\}
\]
and
\[
 \mathcal O_- = \left\{ U=\begin{pmatrix}\rho_n \\ \rho_s \\ u_n \\ u_s \end{pmatrix} \quad \textrm{s.t.}\quad \rho_n,\rho_s>0\quad\textrm{and}\quad u_n< u_s \right\}
\]
is enough to say that on these sets, the smooth solutions to (\ref{syst}) are also solutions of (\ref{conservation}), and conversely. In addition, they are also solutions of
\be
\label{pb1}
\partial_t U + A \partial_x U=0
\ee
where
\begin{equation}
\label{matrice}
 A = \begin{pmatrix}
      u_n & 0 & \rho_n & 0 \\
      0 & u_s & 0 & \rho_s \\
      c\rho_n^{-\frac{1}{3}}+2\alpha & \alpha & u_n & 0 \\
      \alpha & \frac{\alpha}{2} & 0 & u_s
     \end{pmatrix}
\end{equation}
(with $c=\frac{5}{3}\tilde c>0$), and any of this problem is (strictly) hyperbolic provided the others are.

\subsection{Hyperbolicity}

For a system of the form (\ref{pb1}), the right property to study is hyperbolicity, that is, the possibility to diagonalize the matrix $A$ given by (\ref{matrice}). Indeed, this property allows to claim the existence and uniqueness of smooth solutions to (\ref{pb1}) (see \cite{dafermos}). It will also be useful when dealing with weak solutions.

Since system (\ref{syst}) is hyperbolic if and only if system (\ref{conservation}) or system (\ref{pb1}) is, it is enough to investigate the eigenvalues of the matrix $A$ given by (\ref{matrice}) to conclude on the hyperbolicity of any of these systems. The characteristic polynomial of $A$ writes
\bee
P_A(\lambda) = \left((\lambda-u_n)^2 -c\rho_n^\frac{2}{3}-2\alpha\rho_n\right)\left( (\lambda-u_s)^2 -\frac{\alpha}{2}\rho_s\right)-\alpha^2\rho_n \rho_s.
\eee
Even if it is of degree 4 and there are explicit formulae for the roots of polynomials of degree 4, they are very complicated and we are not able to use them here. Our analysis relies only on basic analytical tools to find cases where the polynomial has 4 different roots. The main result of this section is the following:

\begin{thm}
\label{thmhyper}
 Systems (\ref{syst}), (\ref{conservation}) and (\ref{pb1}) are strictly hyperbolic if $\rho_n, \rho_s >0$ and if one of these conditions holds:

 \be
\label{cond1}
 (u_n-u_s)^2 <c\rho_n^{2/3}
\ee
\be
 \label{cond2}
(u_n-u_s)^2 < \frac{c\frac{\alpha}{2}\rho_s\rho_n^{2/3}}{ c\rho_n^{2/3}+2\alpha\rho_n}
\ee
\be
\label{cond3}
\rho_n \leq \left(\frac{c}{2\alpha}\right)^3.
\ee
\end{thm}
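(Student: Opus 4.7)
The plan is to analyze the characteristic polynomial of $A$. Writing
\[
P_A(\lambda) = Q_n(\lambda)\, Q_s(\lambda) - \alpha^2 \rho_n \rho_s
\]
with $Q_n(\lambda) = (\lambda - u_n)^2 - a_n^2$, $a_n^2 := c\rho_n^{2/3} + 2\alpha\rho_n$, and $Q_s(\lambda) = (\lambda - u_s)^2 - a_s^2$, $a_s^2 := \alpha\rho_s/2$, one sees that $P_A$ is monic of degree $4$, so $P_A(\lambda) \to +\infty$ as $\lambda \to \pm\infty$, while $P_A$ takes the common negative value $-\alpha^2 \rho_n \rho_s$ at each of the four points $u_n \pm a_n$, $u_s \pm a_s$. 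The strategy is to exhibit, under each of the three conditions, a fifth point $\lambda^\star$ at which $P_A(\lambda^\star) > 0$ and which is sandwiched between two of the four negative points; the intermediate value theorem then produces four real roots in four disjoint intervals, giving strict hyperbolicity of (\ref{pb1}), and by the change of variable $F$ on $\mathcal O_\pm$ the same for (\ref{syst}) and (\ref{conservation}).

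The unifying algebraic identity is $a_n^2 a_s^2 - \alpha^2 \rho_n \rho_s = \tfrac12\, \alpha c \rho_s \rho_n^{2/3} > 0$, which yields
\[
P_A(u_s) = \tfrac12 \alpha \rho_s \bigl(c\rho_n^{2/3} - (u_n - u_s)^2\bigr), \qquad P_A(u_n) = \tfrac12 \alpha c \rho_s \rho_n^{2/3} - a_n^2 (u_n - u_s)^2.
\]
Hence (\ref{cond1}) is precisely $P_A(u_s) > 0$ and (\ref{cond2}) is precisely $P_A(u_n) > 0$. Under (\ref{cond1}), the inequality $(u_n - u_s)^2 < c\rho_n^{2/3} < a_n^2$ places $u_s$ strictly in $(u_n - a_n, u_n + a_n)$, and IVT applied on $(-\infty, u_n - a_n)$, $(u_n - a_n, u_s)$, $(u_s, u_n + a_n)$, $(u_n + a_n, +\infty)$ produces four distinct roots. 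Under (\ref{cond2}), since the right-hand side of that condition is strictly less than $\alpha\rho_s/2 = a_s^2$ (because $c\rho_n^{2/3} < a_n^2$), one has $u_n \in (u_s - a_s, u_s + a_s)$, and a symmetric IVT argument applies.

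The delicate case is (\ref{cond3}), which constrains only $\rho_n$ through the bound $a_n^2 \leq 2 c \rho_n^{2/3}$ but neither forces $P_A(u_n) > 0$ nor $P_A(u_s) > 0$. Here I would look for $\lambda^\star$ inside the overlap of $(u_n - a_n, u_n + a_n)$ and $(u_s - a_s, u_s + a_s)$ when these intervals meet, and inside the gap between them when they are disjoint; on either region the factors $Q_n$ and $Q_s$ carry the same sign so that $Q_n Q_s > 0$, and the idea is to use $a_n^2 \leq 2 c \rho_n^{2/3}$ to show that the maximum of $Q_n Q_s$ on that region exceeds the constant $\alpha^2 \rho_n \rho_s$. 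Pinpointing that maximum (the critical points of $Q_n Q_s$ are roots of a cubic in $\lambda$) and bounding it from below using (\ref{cond3}) is expected to be the main technical obstacle; once four sign changes of $P_A$ are produced in this way, the IVT argument closes as before.
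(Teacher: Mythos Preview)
Your treatment of conditions (\ref{cond1}) and (\ref{cond2}) is correct and essentially matches the paper's argument. The only cosmetic difference is that you sandwich $u_s$ between $u_n\pm a_n$ (respectively $u_n$ between $u_s\pm a_s$), whereas the paper sandwiches $u_s$ between $u_s\pm a_s$ (respectively $u_n$ between $u_n\pm a_n$); both choices work because $P_A$ equals $-\alpha^2\rho_n\rho_s$ at all four of those points, and your verification that the test point actually lies in the chosen interval is fine.

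For condition (\ref{cond3}), however, your proposal is a plan rather than a proof, and the plan as stated heads into an unnecessary difficulty. You propose to locate $\lambda^\star$ by maximizing $Q_nQ_s$ over the overlap or the gap of the two intervals, and you correctly anticipate that the critical points are governed by a cubic; you then leave this unresolved. The paper avoids this entirely by making the explicit choice $\lambda^\star=\tfrac{u_n+u_s}{2}$. With this choice,
\[
P_A\!\left(\tfrac{u_n+u_s}{2}\right)=\tfrac{1}{16}(u_n-u_s)^4-\tfrac14\bigl(a_s^2+a_n^2\bigr)(u_n-u_s)^2+\tfrac12\alpha c\rho_s\rho_n^{2/3},
\]
which is a quadratic $Q(X)$ in $X=(u_n-u_s)^2$. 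One may assume neither (\ref{cond1}) nor (\ref{cond2}) holds (otherwise we are already done), so $X\geq c\rho_n^{2/3}$ and $X\geq \tfrac12\alpha c\rho_s\rho_n^{2/3}/a_n^2$; under (\ref{cond3}), i.e.\ $2\alpha\rho_n\leq c\rho_n^{2/3}$, the second lower bound implies $X\geq\tfrac14\alpha\rho_s$, and together these force $X$ to lie above the larger root $\beta_+$ of $Q$, hence $P_A(\tfrac{u_n+u_s}{2})>0$. The midpoint lies strictly between $u_n$ and $u_s$, where $P_A\leq 0$ by the negation of (\ref{cond1})--(\ref{cond2}), and the four sign changes follow as before. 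This explicit midpoint trick is the missing idea; once you adopt it, no optimization over a cubic is needed.
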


Obviously, the condition (\ref{cond3}) is the most interesting, because it can be propagated by the solution, whereas the other two may fail to hold after a small time.

\begin{proof}
We work with $U=(\rho_n,\rho_s,u_n,u_s)$ and $\alpha$ fixed, with $\rho_n,\rho_s,\alpha>0$. We first notice that
\[
 \lim_{\lambda \to \pm \infty}P_A(\lambda)=+\infty.
\]
On the other hand, if we let
\[
\begin{cases}
 &\lambda_1^*=u_s-\sqrt{\frac{\alpha}{2}\rho_s}\\
 &\lambda_2^*=u_s+\sqrt{\frac{\alpha}{2}\rho_s}
\end{cases},
\]
we can remark that
\[
 \forall i=1,2 \qquad P(\lambda_i^*)=-\alpha^2 \rho_n\rho_s <0.
\]
But $u_s \in (\lambda_1^*,\lambda_2^*)$ satisfies
\[
 P_A(u_s)=-\frac{\alpha}{2}\rho_s\left((u_n-u_s)^2 -c\rho_n^{2/3}\right).
\]
Hence, if we assume that (\ref{cond1}) holds, we have
\[
P_A(u_s)>0.
\]
Since $P_A$ is a continous function of $\lambda$, we claim that it has four distinct roots $(\lambda_i)_{i=1..4}$ depending on $U$ and $\alpha$ and satisfying (see Figure \ref{polyno})
\[
 \lambda_1<u_s-\sqrt{\frac{\alpha}{2}\rho_s}<\lambda_2<u_s<\lambda_3<u_s+\sqrt{\frac{\alpha}{2}\rho_s}<\lambda_4.
\]

\begin{figure}[htbp]
  \includegraphics{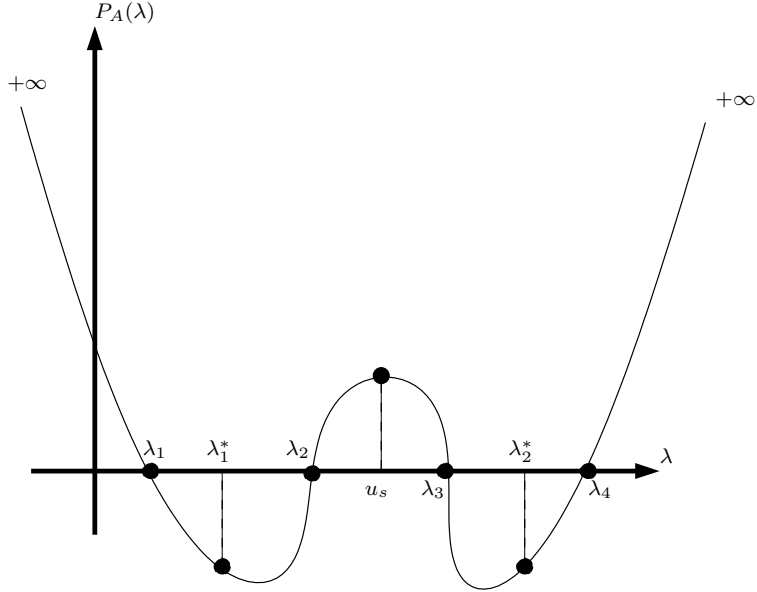}
\caption{In the case when $P_A(u_s)>0$, the polynom $P_A$ has four distinct roots.}
\label{polyno}
\end{figure}

In the same way, we let
\[
\begin{cases}
 &\lambda_3^*=u_n-\sqrt{c\rho_n^{2/3}+2\alpha\rho_n}\\
 &\lambda_4^*=u_n+\sqrt{c\rho_n^{2/3}+2\alpha\rho_n}
\end{cases};
\]
then
\[
 P_A(\lambda_3^*)=P_A(\lambda_4^*)=-\alpha^2\rho_n\rho_s<0
\]
and
\[
 P_A(u_n)=c\frac{\alpha}{2}\rho_s\rho_n^{2/3}-(u_n-u_s)^2(c\rho_n^{2/3}+2\alpha\rho_n).
\]
Thus, under the condition (\ref{cond2}), $P_A$ has four distinct roots $(\lambda_i)_{i=1..4}$ wich satisfy
\[
 \lambda_1<u_n-\sqrt{c\rho_n^{2/3}+2\alpha\rho_n}<\lambda_2<u_n<\lambda_3<u_n+\sqrt{c\rho_n^{2/3}+2\alpha\rho_n}<\lambda_4.
\]
We have then proved that under the condition (\ref{cond1}) or (\ref{cond2}), the systems (\ref{syst}), (\ref{conservation}) and (\ref{pb1}) are strictly hyperbolic.

We now assume that neither (\ref{cond1}) nor (\ref{cond2}) are verified, that is,
\be
\label{nocond1}
 (u_n-u_s)^2 \geq c\rho_n^{2/3}
\ee
and
\be
\label{nocond2}
(u_n-u_s)^2 \geq \frac{c\frac{\alpha}{2}\rho_s\rho_n^{2/3}}{c\rho_n^{2/3}+2\alpha\rho_n}.
\ee
Thus, $u_n \neq u_s$, $P_A(u_n)\leq 0$ and $P_A(u_s)\leq 0$. Let us study the sign of $P_A(\frac{u_n+u_s}{2})$:
\bee
P_A\left(\frac{u_n+u_s}{2}\right)=\frac{1}{16}(u_n-u_s)^4 -\frac{1}{4}\left(\frac{\alpha}{2}\rho_s +c\rho_n^{2/3}+2\alpha\rho_n\right)(u_n-u_s)^2 +c\frac{\alpha}{2}\rho_s\rho_n^{2/3}.
\eee
Let $Q(X)$ be the polynomial defined by
\[
 Q(X)=\frac{1}{16}X^2 -\frac{1}{4}\left(\frac{\alpha}{2}\rho_s +c\rho_n^{2/3}+2\alpha\rho_n\right)X +c\frac{\alpha}{2}\rho_s\rho_n^{2/3}.
\]
Its discriminant is
\[
 \delta =\frac{1}{16}\left(\frac{\alpha}{2}\rho_s-c\rho_n^{2/3}+2\alpha\rho_n \right)^2 +c\frac{\alpha}{2}\rho_n^{5/3}>0,
\]
and its two roots write
\[
 \beta_-=\frac{1}{32}\left(\frac{1}{4}\left(\frac{\alpha}{2}\rho_s +c\rho_n^{2/3}+2\alpha\rho_n\right)-\sqrt{\delta}\right)
\]
\[
 \beta_+=\frac{1}{32}\left(\frac{1}{4}\left(\frac{\alpha}{2}\rho_s +c\rho_n^{2/3}+2\alpha\rho_n\right)+\sqrt{\delta}\right).
\]
Hence,
\[
 P\left(\frac{u_n+u_s}{2}\right)>0 \ssi (u_n-u_s)^2 > \beta_+ \quad\textrm{or}\quad(u_n-u_s)^2 < \beta_-.
\]
Let us assume that (\ref{cond3}) holds, or equivalently, that
\[
 2\alpha\rho_n\leq c\rho_n^{2/3}.
\]
Then,
\[
 \beta_+ < \frac{1}{64}\left(\frac{\alpha}{2}\rho_s +3c\rho_n^{2/3}\right).
\]
But, thanks to (\ref{nocond1}) and (\ref{nocond2}), we know that
\[
 (u_n-u_s)^2>c\rho_n^{2/3}
\]
and
\[
  (u_n-u_s)^2>\frac{\alpha}{4}\rho_s,
\]
from which we deduce that
\[
 (u_n-u_s)^2>\beta_+.
\]
Moreover, we notice that $P_A'(u_s)P_A'(u_n)\neq 0$. We can conclude that the systems (\ref{syst}), (\ref{conservation}) and (\ref{pb1}) are strictly hyperbolic under the condition (\ref{cond1}), or (\ref{cond2}), or under the only condition (\ref{cond3}). In this case, the eigenvalues of A satisfy
\[
 \lambda_1< u_n \leq\lambda_2<\frac{u_n+u_s}{2}<\lambda_3\leq u_s<\lambda_4
\]
if $u_n<u_s$, or the same thing inverting $u_n$ and $u_s$ if $u_s<u_n$.
\end{proof}

\subsection{Entropy}

We are now interested in finding a convex entropy to system (\ref{syst}), that is, a convex function $E(U)$ such that there exists a function $G(U)$ which satisfies
\bee
 \partial_t E(U) + \partial_x G(U)=0.
\eee
Our research is motivated by the fact that a strictly hyperbolic system having a strictly convex entropy admits regular solutions (see \cite{dafermos} for example).

The entropy we are looking for is given by the physical energy:
\[
 E(U)=\frac{1}{2}\rho_n u_n^2 + \frac{1}{2}\rho_s u_s^2+ \frac{9}{10}c \rho_n^{\frac{5}{3}}+\alpha(\rho_n+\frac{1}{2}\rho_s)^2.
\]
We can find the equation satisfied by E just from (\ref{pb1}); it writes
\[
 \partial_t E(U) + \partial_x G(U) =0
\]
with
\[
 G(U) = \frac{1}{2}\rho_n u_n^3 + \frac{1}{2}\rho_s u_s^3 + \frac{3}{2}c\rho_n^\frac{5}{3}u_n  +\alpha (\rho_n + \frac{1}{2}\rho_s)(2\rho_n u_n + \rho_s u_s).
\]
Let us now study its convexity. Its Hessian matrix is given by
\bee
\nabla^2_{U,U}E(U)= \begin{pmatrix}
                 c\rho_n^{-\frac{1}{3}}+2\alpha & \alpha           & u_n    & 0   \\
		 \alpha                         & \frac{\alpha}{2} & 0      & u_s \\
                 u_n                            & 0                & \rho_n & 0   \\
                 0                              & u_s              & 0      & \rho_s
                \end{pmatrix}.
\eee
If we denote $X=(x_1,x_2,x_3,x_4)^T$, where $^T$ is the transposition operator, then
\[
 \begin{split}
  X^T (\nabla^2_{U,U}E(U)) X=&\rho_s \left( x_4 + \frac{u_s}{\rho_s}x_2\right)^2 + \rho_n \left( x_3 + \frac{u_n}{\rho_n}x_1\right)^2\\
&\qquad\qquad + \left(\frac{\alpha}{2}-\frac{u_s^2}{\rho_s} \right)\left(x_2 + \frac{2\alpha\rho_s}{\alpha\rho_s-2u_s^2}x_1 \right)^2 \\
&\qquad\qquad+\left(c\rho_n^{-\frac{1}{3}} +2\alpha-\frac{u_n^2}{\rho_n}-\frac{2\alpha^2\rho_s}{\alpha\rho_s-2u_s^2}\right)x_1^2.
 \end{split}
\]
Hence, $E$ will be strictly convex uniformly on every compact subset of $\mc O$ under the conditions 
\[
 \begin{cases}
  &\displaystyle{\frac{\alpha}{2}-\frac{u_s^2}{\rho_s}>0}\\
  &\displaystyle{c\rho_n^{-\frac{1}{3}} +2\alpha-\frac{u_n^2}{\rho_n}-\frac{2\alpha^2\rho_s}{\alpha\rho_s-2u_s^2}>0}
 \end{cases}
\]
which are equivalent to
\be
\label{entropy}
 \begin{cases}
  &\displaystyle{u_s^2 <\frac{\alpha}{2}\rho_s}\\
  &\displaystyle{ P_A(0) >0}
 \end{cases}.
\ee
We can now write the following theorem of existence of solutions:

\begin{thm}
 Let $\alpha>0$, and
\[
\begin{split}
\mathcal A = &\bigg\{(\rho_n,\rho_s,u_n,u_s)^T, \quad\rho_n,\rho_s>0, ~~(\ref{cond1}), (\ref{cond2})~\textrm{or}~ (\ref{cond3})~ \textrm{holds},\\
&\qquad u_s^2 <\frac{\alpha}{2}\rho_s, \quad P_A(0) >0,\quad u_n\neq u_s \bigg\}.
\end{split}
\]
Let $U_0 \in C^1(\R; \Omega)$ where $\Omega$ is a compact subset of $\mathcal A$, such that $\partial_x U_0 \in H^l$ for some $l>\frac{1}{2}$. 
Then there exists $T_\infty$, such that $0<T_\infty \leq \infty$, and a unique function $U\in C^1([0,T_\infty)\times \R;\mathcal A)$, which is a classical solution of the problem (\ref{pb1}) on $[0,T_\infty)$ with the initial condition
\[
 U(0,x)=U_0(x),\qquad\forall x\in \R,
\]
 and of the problem (\ref{conservation}) with same initial condition. Moreover,
\[
 \partial_x U(.,t) \in C^0([0,T_\infty);H^l).
\]
\end{thm}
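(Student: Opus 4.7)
The plan is to apply the classical local well-posedness theory for quasilinear symmetric hyperbolic systems endowed with a strictly convex entropy, as laid out in \cite{dafermos}. Every ingredient needed has already been assembled: strict hyperbolicity of $A(U)$ on $\mathcal{A}$ from Theorem \ref{thmhyper}, and strict convexity of the entropy $E$ on $\mathcal{A}$ from condition (\ref{entropy}). The key qualitative observation is that, because $\Omega$ is compact and contained in the open set $\mathcal{A}$, these properties hold \emph{uniformly} on $\Omega$: the four eigenvalues of $A(U)$ are separated by a positive gap, and the Hessian $\nabla^2_{U,U} E(U)$ has eigenvalues uniformly bounded from below by a positive constant.

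Next I would use the entropy to symmetrize system (\ref{conservation}). Writing the quasilinear form $\partial_t U + DF(U) \partial_x U = 0$ with $F$ the flux of (\ref{conservation}), the classical Godunov--Mock identity (a direct consequence of the compatibility of $E$ with $F$) states that $S(U) := \nabla^2 E(U)$ is a symmetrizer, i.e. $S(U)\,DF(U)$ is symmetric. Multiplying on the left by $S(U)$ casts (\ref{conservation}) into Friedrichs symmetric form, with a positive definite coefficient in front of $\partial_t U$. Equivalently, and more convenient for the quasilinear setup, one works directly with (\ref{pb1}) after pre-multiplication by $S(U)$.

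I would then apply the standard quasilinear symmetric hyperbolic existence theorem (Theorem 5.2.1 in \cite{dafermos}): for initial data taking values in a compact subset of the state space where the symmetrizer is positive definite, and with derivative in $H^l$ for $l > N/2 = 1/2$, there exists $T_\infty > 0$ and a unique solution $U \in C^1([0,T_\infty) \times \R)$ with $\partial_x U \in C^0([0,T_\infty); H^l)$. The hypothesis that $U_0$ is only bounded (its values lie in the compact $\Omega$, but $U_0 \notin H^l$ globally) fits this framework, since the theorem is stated for data of the form ``constant reference state plus an $H^{l+1}$ perturbation'' — one subtracts a constant state in $\Omega$ and applies the theorem to the difference, which has derivative in $H^l$ by assumption. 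The equivalence of (\ref{pb1}), (\ref{conservation}) and (\ref{syst}) for smooth solutions on $\mathcal{A}$, guaranteed by the invertibility of $F$ (whose Jacobian $j_F = \rho_n \rho_s (u_s - u_n)$ does not vanish on $\mathcal{A}$), transfers the result to all three formulations.

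The main technical point is ensuring that the solution actually stays in $\mathcal{A}$ on the time interval of existence, so that the pre-multiplication by $S(U)$ remains licit and the equivalence of the three systems remains valid. This is handled by a continuity/bootstrap argument: since $U_0(x) \in \Omega$ which is compactly contained in the open set $\mathcal{A}$, and $U \in C^0([0,T_\infty) \times \R)$ with $U(0,\cdot) = U_0$, the solution remains in $\mathcal{A}$ for times small enough, and $T_\infty$ can be chosen as the maximal such time. Uniqueness follows from the standard $L^2$-energy estimate based on the symmetrizer, applied to the difference of two solutions.
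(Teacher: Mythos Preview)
Your proposal is correct and follows precisely the approach the paper intends: the paper does not give an explicit proof of this theorem but simply states it as a direct consequence of the classical theory in \cite{dafermos}, having just established strict hyperbolicity (Theorem \ref{thmhyper}) and the strict convexity of the entropy $E$ under conditions (\ref{entropy}). Your write-up supplies exactly the details that the paper leaves implicit---symmetrization by the entropy Hessian, uniformity on the compact $\Omega$, the invertibility of $F$ on $\mathcal A$ to pass between (\ref{pb1}) and (\ref{conservation}), and the invocation of the standard local existence theorem for symmetric hyperbolic systems.
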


\section{Weak solutions and the Riemann problem}
\label{weak solutions}

In Section \ref{hyperbolicity}, we have proved the existence of smooth solutions to the problem (\ref{syst}), or equivalently to (\ref{conservation}) or (\ref{pb1}). However, these solutions are not satisfactory. First, they exist only for small times, while we would like to have global solutions. Moreover, physical experiments often show non regular behaviours to physical problems. The solutions of hyperbolic systems having a physical meaning should therefore be authorized to be discontinous. It is then natural to define a weaker notion of solutions. 

In this section, we will recall the usual definition of weak solutions to a system of conservation laws, and state an existence and uniqueness result for the system (\ref{conservation}). We will then give some details on the qualitative behaviour of such a solution by studying the characteristic fields and the shock curves.

\subsection{Existence of weak solutions}

When dealing with weak solutions, systems (\ref{syst}), (\ref{conservation}) and (\ref{pb1}) are no more equivalent. Indeed, the usual definition of weak solutions for an hyperbolic system is valid only for systems of conservation laws. We consequently restrict our analysis to the system (\ref{conservation}), which we rewrite here:
\begin{equation}
\label{conservation2}
 \begin{cases}
  & \partial_t \rho_n + \partial_x (\rho_n u_n)=0 \\
  & \partial_t \rho_s + \partial_x (\rho_s u_s)=0 \\
  & \partial_t (\rho_n u_n + \rho_s u_s)+ \partial_x (\rho_n u_n^2 + \rho_s u_s^2 + \tilde c \rho_n^{5/3}+ \frac{\alpha}{4}(2\rho_n+\rho_s)^2 )=0\\
  %&\quad \quad  \\
  & \partial_t (\frac{1}{2}\rho_n u_n^2 + \frac{1}{2}\rho_s u_s^2+ \frac{3}{2}\tilde c \rho_n^{5/3}+\alpha(\rho_n+\frac{1}{2}\rho_s)^2)\\
  &\quad\quad + \partial_x(\frac{1}{2}\rho_n u_n^3 + \frac{1}{2}\rho_s u_s^3 + \frac{5}{2}\tilde c\rho_n^{5/3}u_n +\alpha (\rho_n + \frac{1}{2}\rho_s)(2\rho_n u_n + \rho_s u_s))=0\\
  %&\quad\quad 
 \end{cases}
\end{equation}
With obvious notations, we can write it
\bee
\partial_t F(U) +\partial_x H(U)=0.
\eee
We define weak solutions in the usual way:

\begin{defi}
 Let $U_0=(\rho_n^0,\rho_s^0,u_n^0,u_s^0)\in (L^\infty(\R))^4$. We say that the bounded vector field $U=(\rho_n,\rho_s,u_n,u_s)$ is a weak solution of system (\ref{conservation2}) supplemented with the initial condition $U_0$ if for all $\Phi \in (C^1(\R_+\times\R))^4$ compactly supported we have
\bee
\begin{split}
\int_0^{+\infty} \!\!\!\int_\R &\Big(F(U(t,x)).\partial_t \Phi(t,x) + H(U(t,x)).\partial_x \Phi(t,x)\Big)\ud x\ud t \\
&+\int_\R F(U_0(x)) \Phi (0,x)\ud x =0.
\end{split}
\eee

Moreover, we say that $U$ is a solution of the Riemann problem if the initial condition is of the form
\[
 U_0(x) = \begin{cases}
        U_0^-\quad \textrm{if}~x<0\\
	U_0^+\quad \textrm{if}~x>0\\
       \end{cases}
\]
where $U_0^\pm$ are two constant states in $\R^4$.
\end{defi}

We now state the existence and uniqueness result:

\begin{thm}
 Let $\bar U =(\bar \rho_n,\bar\rho_s,\bar u_n,\bar u_s)$ be a constant state such that $\bar\rho_n,\bar\rho_s>0$ and $\bar u_n \neq \bar u_s$. Let $U^0=(\rho_n^0,\rho_s^0,u_n^0,u_s^0)$ be a bounded function of $x$ having small total variation and such that $\|U^0-\bar U\|_{L^\infty}$ is sufficiently small; in particular, there exists a constant $r>0$ such that $\rho_n^0>r$ and $\rho_s^0>r$, and such that $u_n^0-u_s^0>r$ almost everywhere, or $u_s^0-u_n^0>r$ a.e.. Then the system (\ref{conservation2}) endowed with the intial condition $U^0$ admits a unique weak global entropic (in the sense of Liu) solution.
\end{thm}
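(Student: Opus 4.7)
The plan is to reduce the statement to a direct application of Glimm's existence theorem for weak solutions of strictly hyperbolic systems of conservation laws with small BV initial data, supplemented by the Liu-Bressan uniqueness theory for Liu-admissible entropy solutions. Near any reference state $\bar U$ satisfying $\bar\rho_n,\bar\rho_s>0$ and $\bar u_n\neq\bar u_s$, the three hypotheses to be verified are: strict hyperbolicity of the system (\ref{conservation2}) in a neighborhood of $\bar U$, sufficient smoothness of the flux $H(U)$, and well-posedness of the Riemann problem for small jumps with a unique self-similar Liu-admissible solution.

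First I would establish strict hyperbolicity of (\ref{conservation2}) in a $L^\infty$-neighborhood of $\bar U$. The lower bound $\rho_n^0,\rho_s^0>r$ combined with $|u_n^0-u_s^0|>r$ a.e.\ ensures that $U(t,x)$ stays in a compact subset of $\mathcal{O}_+$ (or $\mathcal{O}_-$), so that the Jacobian $j_F=\rho_n\rho_s(u_s-u_n)$ remains bounded away from zero and (\ref{conservation2}), (\ref{syst}), (\ref{pb1}) are equivalent. Smallness of $\|U^0-\bar U\|_{L^\infty}$ then lets me transfer any one of the open conditions (\ref{cond1})–(\ref{cond3}) from $\bar U$ to a whole neighborhood, and Theorem~\ref{thmhyper} together with continuity of the roots of $P_A$ produces four smooth, separated eigenvalues $\lambda_1<\lambda_2<\lambda_3<\lambda_4$ there. (If $\bar U$ itself sits on the boundary of $\mathcal{A}$, one must implicitly strengthen the hypotheses to force hyperbolicity; this is consistent with the author's standing assumptions.)

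Second I would analyze the four characteristic fields by computing $\nabla_U\lambda_i\cdot r_i$ along the right eigenvectors of the matrix $A$ in (\ref{matrice}). Since the superfluid component is pressureless, some of the fields are expected not to be genuinely nonlinear in the Lax sense; in that case the classical Lax shock conditions do not characterize admissible discontinuities, and this is precisely why the statement invokes the Liu admissibility criterion, which requires that along the Hugoniot locus the shock speed be a minimum of the Rankine-Hugoniot velocity between the two states. The Riemann solver is then built by concatenating, in each characteristic family, either a rarefaction fan, a Liu-admissible shock, or a composite wave (shock+rarefaction) as dictated by the monotonicity of the RH speed along the Hugoniot curve.

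Finally I would invoke the existence part from Glimm's random choice scheme (or equivalently a front tracking scheme in the Bressan-Liu-Yang framework): smallness of the total variation of $U^0-\bar U$ together with the interaction potential estimates yields a uniform BV bound and a weak solution globally in time, which lies in the neighborhood of $\bar U$ where hyperbolicity holds. Uniqueness then follows from the $L^1$-stability theory for the Standard Riemann Semigroup associated with Liu-admissible Riemann solvers. The main obstacle in this program is the construction of the Riemann solver itself: one has to compute the Hugoniot loci explicitly enough to check the monotonicity properties of the RH speed, classify the non-genuinely-nonlinear families, and prove a quantitative wave interaction estimate of Glimm type; everything else is routine once those ingredients are in place.
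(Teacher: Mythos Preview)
Your proposal is correct and follows the same route as the paper, which in fact gives no formal proof at all: the paper merely remarks that the theorem is a consequence of the general existence results in \cite{bressan} and \cite{glf} and the uniqueness theory in \cite{baiti}, the only model-specific input being that strict hyperbolicity is ensured (the paper singles out condition~(\ref{cond3}) as the one that is propagated). Your sketch is simply a more explicit unpacking of what those cited theorems require---strict hyperbolicity near $\bar U$, a well-posed Liu-admissible Riemann solver accommodating possibly non-genuinely-nonlinear fields, and Glimm/front-tracking interaction estimates---and you correctly flag the implicit hyperbolicity assumption on $\bar U$ that the paper leaves tacit.
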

Weak solutions of hyperbolic equations are not unique in general. The entropy criterion of Liu is a way of selecting physical solutions, and weak solutions are unique in that class. See \cite{baiti} for details and more general results on uniqueness of solutions to hyperbolic systems. General existence theorems for hyperbolic systems can be found in \cite{bressan} or \cite{glf}. Let us just mention that this theorem holds true because of the hyperbolicity condition (\ref{cond3}) wich is propagated by the solutions.

\subsection{Nature of the characteristic fields}

Let $\lambda$ be an eigenvalue of $A$, that is, a solution of
\be
\label{pol}
 \left((\lambda-u_n)^2 -c\rho_n^\frac{2}{3}-2\alpha\rho_n\right)\left( (\lambda-u_s)^2 -\frac{\alpha}{2}\rho_s\right)-\alpha^2\rho_n \rho_s=0.
\ee
An associated eigenvector is
\[
 X=\begin{pmatrix}
    (\lambda -u_s)^2 -\frac{\alpha}{2}\rho_s \\
    \alpha \rho_s \\
    \frac{1}{\rho_n}(\lambda -u_n)((\lambda-u_s)^2 -\frac{\alpha}{2}\rho_s)\\
    \alpha(\lambda-u_s)
   \end{pmatrix}.
\]
We can compute the derivatives of $\lambda$ directly from (\ref{pol}) and then we get, letting
\[
 K=-\frac{1}{2}P'_A(\lambda)\neq 0,
\]
\bee
\nabla_U \lambda=
\begin{pmatrix}
 \frac{1}{2K}(\frac{2}{3}c\rho_n^{-1/3}+2\alpha)(\frac{\alpha}{2}\rho_s -(\lambda-u_s)^2)\\
 \frac{1}{4K}(c\rho_n^{2/3}+2\alpha\rho_n-(\lambda-u_n)^2)\\
 \frac{1}{K}(\lambda-u_n)(\frac{\alpha}{2}\rho_s-(\lambda-u_s)^2)\\
 \frac{1}{K}(\lambda-u_s)(c\rho_n^{2/3}+2\alpha\rho_n-(\lambda-u_n)^2) 
\end{pmatrix}.
\eee
Thus, we get for the associated characteristic field
\bee
\begin{split}
 \nabla_U\lambda.X &=\frac{1}{K}\Bigg[ -\frac{1}{2}\left(\frac{2}{3}c\rho_n^{-1/3}+2\alpha\right)\left(\frac{\alpha}{2}\rho_s -(\lambda-u_s)^2\right)^2 \\
                   &\qquad\qquad+\frac{\alpha}{4}\rho_s\left(c\rho_n^\frac{2}{3}+2\alpha\rho_n-(\lambda-u_n)^2 \right)\\
                   &\qquad\qquad-\frac{1}{\rho_n}(\lambda-u_n)^2\left((\lambda-u_s)^2 -\frac{\alpha}{2}\rho_s\right)^2\\ &\qquad\qquad+\alpha(\lambda-u_s)^2\Big(c\rho_n^\frac{2}{3}+2\alpha\rho_n-(\lambda-u_n)^2 \Big)\Bigg].
\end{split}
\eee
It is easy to see that if
\be
\label{condition}
 (\lambda-u_n)^2- c\rho_n^\frac{2}{3}-2\alpha\rho_n>0,
\ee
holds, then the field is genuinely nonlinear. Since
\[
 P_A(u_n-\sqrt{c\rho_n^\frac{2}{3}+2\alpha\rho_n})=P_A(u_n+\sqrt{c\rho_n^\frac{2}{3}+2\alpha\rho_n})=-\alpha^2\rho_n\rho_s<0,
\]
condition (\ref{condition}) always holds for the fields 1 and 4 associated with the extremal eigenvalues. Hence these fields are genuinely nonlinear, and they will induce the formation of shock or rarefaction waves. Nevertheless, we have not been able to establish anything for the fields 2 and 3.

\subsection{Shock curves}

In this part, we are interested in the shock solutions of the Riemann problem for system (\ref{conservation}). The solutions we are looking for are of the form
\bee
U(t,x)=
\begin{cases}
 & U^- \quad \textrm{if}~x<\sigma t \\
 & U^+ \quad \textrm{if}~x>\sigma t 
\end{cases}
\eee
with $U^\pm = (\rho_n^\pm, \rho_s^\pm, u_n^\pm, u_s^\pm)$ the left and right constant states, and $\sigma \in \R$ the shock speed. They must satisfy the Rankine-Hugoniot equations, which write for our system
\be
\label{RH}
\begin{split}
&\sigma
\begin{bmatrix}
 \rho_n \\ \rho_s \\\rho_n u_n +\rho_s u_s \\ \frac{1}{2}\rho_n u_n^2 + \frac{1}{2}\rho_s u_s^2+ \frac{3}{2}\tilde c \rho_n^{5/3}+\alpha(\rho_n+\frac{1}{2}\rho_s)^2
\end{bmatrix}\\
&\qquad= \begin{bmatrix}
   \rho_n u_n \\ \rho_s u_s \\ \rho_n u_n^2 + \rho_s u_s^2 + \tilde c \rho_n^{5/3} + \frac{\alpha}{4}(2\rho_n+\rho_s)^2 \\ \frac{1}{2}\rho_n u_n^3 + \frac{1}{2}\rho_s u_s^3 + \frac{5}{2}\tilde c\rho_n^{5/3}u_n  +\alpha (\rho_n + \frac{1}{2}\rho_s)(2\rho_n u_n + \rho_s u_s)
  \end{bmatrix}.
\end{split}
\ee
Here, the bracket means
\[
 [X]=X^+ -X^-.
\]
Let
\[
 w_n=u_n-\sigma
\]
and
\[
 w_s = u_s -\sigma.
\]
Then, the first two equations of (\ref{RH}) write simply
\be
\label{RH1}
\begin{cases}
 &[\rho_n w_n]=0\\
 &[\rho_s w_s]=0.
\end{cases}
\ee
Hence, if we know the left state $U^-=(\rho_n^-,\rho_s^-,u_n^-,u_s^+)$, the shock speed $\sigma$ and the right densities $\rho_n^+, \rho_s^+$, it is easy to compute the right velocities $u_n^+$ and $u_s^+$.

The third equation in (\ref{RH}) writes
\[
 [\rho_nu_n^2 -\rho_nu_n\sigma +\rho_su_s^2 -\rho_su_s\sigma +\tilde c \rho_n^{5/3}+\frac{\alpha}{4}(2\rho_n+\rho_s)^2]=0.
\]
Using (\ref{RH1}) and letting
\[
 M_n = \rho_n^-w_n^- = \rho_n^+w_n^+,\qquad\qquad M_s=\rho_s^-w_s^-=\rho_s^+w_s^+,
\]
and
\[
 \tau_n=\frac{1}{\rho_n},\qquad\qquad\tau_s=\frac{1}{\rho_s},
\]
it becomes
\be
\label{RH3}
 M_n^2[\tau_n]+M_s^2[\tau_s]+\tilde c [\rho_n^{5/3}]+\frac{\alpha}{4}[(2\rho_n+\rho_s)^2]=0.
\ee
In the same way, the fourth Rankine-Hugoniot relation (\ref{RH}) becomes, using (\ref{RH3}):
\be
 \label{RH4}
M_n^3 [\tau_n^2]+M_s^3[\tau_s^2]+5 M_n[\rho_n^{2/3}]+\alpha(2M_n+M_s)[2\rho_n+\rho_s]=0.
\ee
Let
\bee
J(\rho_n,\rho_s,u_n,u_s,\sigma)=\begin{pmatrix}
                                 \rho_n(u_n-\sigma)\\
				 \rho_s(u_s-\sigma)\\
				 \rho_n(u_n-\sigma)^2 +\rho_s(u_s-\sigma)^2 + \tilde c \rho_n^{5/3} +\frac{\alpha}{4}(2\rho_n+\rho_s)^2\\
				 \rho_n(u_n-\sigma)^3 +\rho_s(u_s-\sigma)^3 + \frac{5}{2} \rho_n^{5/3}(u_n-\sigma)\\ \qquad\qquad+\frac{\alpha}{2}(2\rho_n+\rho_s)\left(2\rho_n(u_n-\sigma)+\rho_s(u_s-\sigma)\right)
                                \end{pmatrix}.
\eee
Equations (\ref{RH1}), (\ref{RH3}), (\ref{RH4}) imply that the Rankine-Hugoniot relations are satisfied if and only if
\[
 J(\rho_n^+,\rho_s^+,u_n^+,u_s^+,\sigma)= J(\rho_n^-,\rho_s^-,u_n^-,u_s^-,\sigma).
\]
Thanks to the implicit functions theorem, for any left state $U^-=(\rho_n^-,\rho_s^-,u_n^-,u_s^-)$ and shock velocity $\sigma_0$ such that $D_{(\rho_n,\rho_s,u_n,u_s)}J(\rho_n^-,\rho_s^-,u_n^-,u_s^-,\sigma_0)$ is invertible (take for example $\rho_n^-=\rho_s^-=u_n^-=1$, $u_s^-=\sigma_0=0$), there exist some right states $U^+(\sigma)$ close to $U^-$ for  $\sigma$ in a neighbourhood of $\sigma_0$ such that 
\[
 J(U^+(\sigma),\sigma)=J(U^-,\sigma),
\]
that is, system (\ref{conservation2}) admits some shock solutions, which can be parametrized by the velocity $\sigma$. Note that the entropy criterion will select just half of each shock curve.

Such shock solutions can be observed with numerical simulations. Examples of numerical results using a Lax-Friedrichs scheme (see \cite{godrav} for example) are given by Figure \ref{rhon} which represents the quantity $\rho_n$. The figure is made of three graphics which show the results of the numerical scheme using a spatial grid made of 1000, 5000 and 10000 points. The time step is taken so that
\[
 \frac{\Delta t}{\Delta x}=5.
\]
As expected, the solutions show 5 different states separated by 4 waves. This contrasts with the solutions of the Euler equations which cannot exhibit more than 3 waves in general, 2 waves in the isentropic case. The presence of the superfluid part induces therefore some changes even in the behaviour of the normal part of the fluid compared to a usual (non quantum) fluid.
\begin{figure}
 \begin{center}
 \includegraphics[width=12cm]{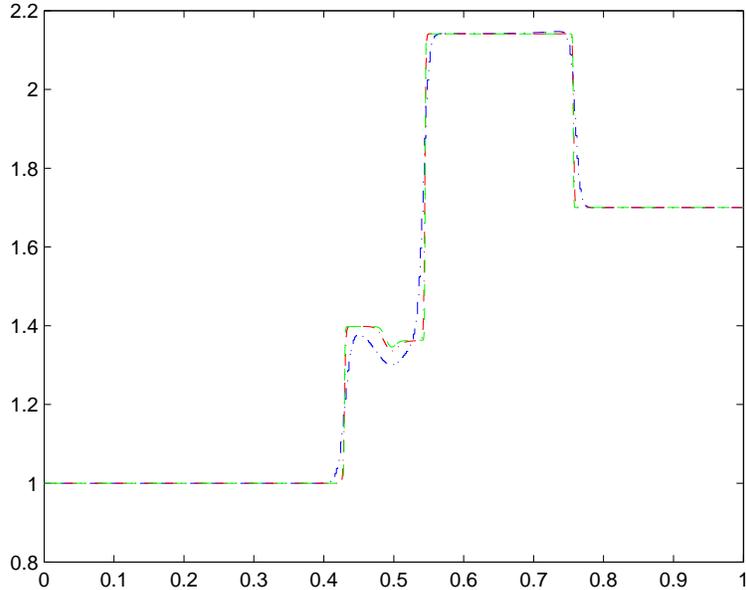}
\end{center}
\caption{The mass of the normal fluid $\rho_n$.}
\label{rhon}
\end{figure}

\section*{Conclusions}
We have derived a two-fluids model for a quantum gas from a system consisting of a Boltzmann-like and a Gross-Pitaevskii equations. Our derivation is formal, and it seems to be very difficult to obtain a rigourous proof of convergence, at least until there is a good existence framework for the quantum Boltzmann-Bose equation. However, we have shown that our hydrodynamic model is well-posed, which is the first step in a good understanding of the limiting process.

The limit system (\ref{conservation}) that we propose does not seem to exist in the litterature. It remains to see whether it provides a good description of the evolution of a superfluid, before any further mathematical exploration of its properties.

\section*{Acknowledgements} The author would like to thank S. Cordier for his help on numerical simulations.

\end{document}